\newtheorem{theorem}{Theorem}
\theoremstyle{definition}
\newtheorem{definition}{Definition}
\theoremstyle{remark}
\newcommand{\patbox}[1]{\vspace{2mm}\noindent\fbox{\parbox{0.47\textwidth}{\hspace{1mm}\parbox{0.45\textwidth}{\hspace{1mm}\\#1\vspace{1.5mm}}}}\\}
\lstdefinelanguage{Julia}%
  {morekeywords={abstract,break,case,catch,const,continue,do,else,elseif,%
      end,export,false,for,function,immutable,import,importall,if,in,%
      macro,module,otherwise,quote,return,switch,true,try,type,typealias,%
      using,while},%
   sensitive=true,%
   alsoother={\$},%
   morecomment=[l]\#,%
   morecomment=[n]{\#=}{=\#},%
   morestring=[s]{"}{"},%
   morestring=[m]{'}{'},%
     literate={é}{{\'e}}1
           {è}{{\`e}}1
           {ù}{{\`u}}1
}[keywords,comments,strings]%
\bfseries\color{blue},
\begin{document}

\title{Lanczos recursion on a quantum computer for the Green's function and ground state}
\author{Thomas E.~Baker}
\affiliation{Institut quantique \& Département de physique, Université de Sherbrooke, Sherbrooke, Québec J1K 2R1 Canada}
\affiliation{Department of Physics, University of York, Heslington,York YO10 5DD, United Kingdom}

\begin{abstract}
A state-preserving quantum counting algorithm is used to obtain coefficients of a Lanczos recursion from a single ground state wavefunction on the quantum computer.  This is used to compute the continued fraction representation of an interacting Green's function for use in condensed matter, particle physics, and other areas. The wavefunction does not need to be re-prepared at each iteration. The quantum algorithm represents an exponential reduction in memory over known classical methods. An extension of the method to determining the ground state is also discussed. 
\end{abstract}
\maketitle

\section{Introduction}

Quantum computers are able to store a superposition of states on each qubit, and this has led to the demonstration that some algorithms can provide a quantum advantage in terms of the amount of time needed to run some tasks \cite{nielsen2010quantum}.  Finding algorithms that are more efficient than classical equivalents are highly sought, particularly those for quantum chemistry applications \cite{lanyon2010towards}. 

Many existing proposals focus on cases where a measurement from a wavefunction is used to obtain some property of a molecular system; however, the time needed to solve for a wavefunction can be prohibitively long for large systems, as demonstrated by Poulin,~{\it et.~al.} \cite{poulin2014trotter,lemieux2020resource}.  The problem is provably hard \cite{schuch2009computational}, but improved methods must be found to ensure the next generation of technologies are discovered.  Using the fewest measurements to obtain the most descriptive quantities is therefore desirable \cite{BP20}. This is the central concept here.

One useful quantity is the fully interacting Green's function, which is approximated in current strategies on the classical computer such as dynamical mean-field theory (DMFT) \cite{georges1996dynamical,senechal2008introduction}, {\it GW} \cite{aryasetiawan1998gw}, random-phase approximation (RPA) methods \cite{chen2017random}, and quantum field theories \cite{peskin1995quantum}.  The Green's function contains the ground state and excitation energies notably in the spectral function, which is the imaginary part of the Green's function \cite{economou1983green}. 

Recently, an excellent initial step in obtaining the Green's function was to evaluate correlation functions of the form $\langle\phi(t_1)\phi(t_2)\ldots\phi(t_r)\rangle$ for different times $t_i$ and fields $\phi$ by Rall in Ref.~\onlinecite{rall2020quantum}, effectively adapting techniques commonly used for time evolution of matrix product states to the quantum computer \cite{schollwock2005density,baker2019m}. This form is well-suited for non-equilibrium phenomena. However, this form of the correlation function suffers from known errors.  First, a finite time interval will create difficulties when evaluating a Fourier transform.  A well-known uncertainty in the amount of time used to resolve the correlation function, $\Delta t$, and the spread of energies, $\Delta E$, is $\Delta t\Delta E\geq1/2$ \cite{boas2006mathematical}. This implies that even long time intervals will still not be able to determine the energy down to arbitrary accuracy.  In practice, this will cause properties such as peaks in the spectral function to broaden.  Another source of error is the Nyquist error when the function is over- or under-sampled in time.

Another method common to tensor networks are correction-vector techniques which solve for a single frequency of the Green's function \cite{jeckelmann2002ground,schollwock2005density}. An equivalent method suitable for a quantum computer was recently shown in Ref.~\onlinecite{tong2020fast}. However, in general, it should be expected that the wavefunction is not efficiently available \cite{schuch2009computational,poulin2014trotter,lemieux2020resource}, so the multiple wavefunction preparations required  will limit the usefulness of both of these algorithms and others, including those for impurity problems \cite{dallaire2016method,dallaire2016quantum,bauer2016hybrid}.

 Computing the Green's function directly in frequency space for all frequencies at once would be highly advantageous to ensure improved Fourier transforms. A form that is useful in several applications that has this property is the continued fraction representation \cite{senechal2008introduction}. If the continued fraction representation is known, then a Fourier transform can then be applied to transform it back to the time domain.

In this work, the continued fraction representation of the Green's function, as reviewed in Sec.~\ref{lanczos}, is obtained through a quantum version of Lanczos recursion (QLR). For demonstration purposes, the one-body Green's function is found from the operator $\hat c_{j\bar\sigma}^\dagger\hat c_{i\sigma}$, although generalizations to other operators are straight-forward.  A state-preserving quantum counting algorithm \cite{brassard2002quantum,marriott2005quantum,temme2011quantum} is used to extract the necessary coefficients to construct the continued fraction in Sec.~\ref{QLR}. Essentially, the quantum counting algorithm replaces the copy operation on a classical computer \cite{temme2011quantum}. The QLR receives an additional step to obtain the ground state and energy from a knowledge of the Lanczos coefficients, and that is also demonstrated in Sec.~\ref{groundstate}.  Additional  comparison with known classical and quantum algorithms are discussed in Sec.~\ref{comparison}.

\section{Lanczos recursion}\label{lanczos}

Constructing a Krylov subspace is an effective way to solve for the ground state of a quantum Hamiltonian, $\mathcal{H}$, as it is applied on some arbitrary starting state.

\begin{definition}[Krylov subspace]\label{krylovcvg}
Given an initial state $\Phi$ and self-adjoint (Hermitian) operator referred to here as the Hamiltonian $\mathcal{H}$, the set $\{\Phi,\mathcal{H}\Phi,\mathcal{H}^2\Phi,\ldots\mathcal{H}^{N-1}\Phi\}$ of $N$ vectors corresponding to the $N$ states in the full Hilbert space is known as the Krylov subspace.
\end{definition}

The Krylov subspace contains the ground state wavefunction and can be used to find it. To see this, application of $\mathcal{H}$ to some power $p$ to $\Phi$ generates a spectral decomposition of $\Phi$ of the form 
$\mathcal{H}^p|\Phi\rangle=\sum_n(E_n)^p|\phi_n\rangle$ 
in terms of eigenvalues $E_n$ and eigenvectors $\phi_n$ of $\mathcal{H}$. By including repeated application of $\mathcal{H}$ (larger $p$), the most extremal eigenvalues of $\mathcal{H}$ will become further from the next nearest eigenvalues \cite{baker2019m}. Diagonalizing $\mathcal{H}$ projected into this subspace gives the ground state energy, and this is a generally reliable method for solving systems even if only a partial basis is used. A more exhaustive proof is omitted in favor of this intuition.

Constructing the Krylov subspace vectors by orthogonalizing against all vectors previously found is known as a power method.  However, if all vectors are guaranteed to be orthogonal, then a 3-term Lanczos recursion can be used to find the next vector in the subspace \cite{lanczos1950iteration}.

\begin{theorem}[Lanczos recursion]\label{basiclanczosrecursion}
The Krylov subspace is generated by a Lanczos recursion of the form
\begin{equation}\label{lanczosrecursion}
|\psi_{n+1}\rangle=\mathcal{H}|\psi_n\rangle-\alpha_n|\psi_n\rangle-\beta_n|\psi_{n-1}\rangle
\end{equation}
where $n\in\mathbb{Z}^+$, $\alpha_n\equiv\langle\psi_n|\mathcal{H}|\psi_n\rangle$, $\beta^2_n\equiv\langle\psi_{n-1}|\psi_{n-1}\rangle$, $|\psi_{-1}\rangle=0$, and $|\psi_0\rangle=|\Phi\rangle$.
\end{theorem}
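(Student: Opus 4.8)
The plan is to establish the recursion by induction, proving simultaneously that the generated vectors are mutually orthogonal and that the first $n+1$ of them span the same space as $\{|\Phi\rangle,\mathcal{H}|\Phi\rangle,\ldots,\mathcal{H}^n|\Phi\rangle\}$ from Definition~\ref{krylovcvg}. The entire point is that although the power method orthogonalizes each new vector against every predecessor, self-adjointness of $\mathcal{H}$ forces all but two of those overlaps to vanish for free, leaving exactly the three-term form in \eqref{lanczosrecursion}.

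First I would treat the base case. With $|\psi_{-1}\rangle=0$ and $|\psi_0\rangle=|\Phi\rangle$, the recursion gives $|\psi_1\rangle=\mathcal{H}|\psi_0\rangle-\alpha_0|\psi_0\rangle$, and I would fix $\alpha_0$ by demanding $\langle\psi_0|\psi_1\rangle=0$, which yields $\alpha_0=\langle\psi_0|\mathcal{H}|\psi_0\rangle$ for the normalized starting vector. For the inductive step I assume that $\{|\psi_0\rangle,\ldots,|\psi_n\rangle\}$ are mutually orthogonal and that each $|\psi_k\rangle$ with $k\le n$ equals $\mathcal{H}^k|\Phi\rangle$ plus a combination of lower powers of $\mathcal{H}$ acting on $|\Phi\rangle$.

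Next I would impose orthogonality of the newly defined $|\psi_{n+1}\rangle$ against each earlier vector. Projecting \eqref{lanczosrecursion} onto $\langle\psi_n|$ and setting the result to zero determines $\alpha_n=\langle\psi_n|\mathcal{H}|\psi_n\rangle$, while projecting onto $\langle\psi_{n-1}|$ determines $\beta_n$ through $\beta_n\langle\psi_{n-1}|\psi_{n-1}\rangle=\langle\psi_{n-1}|\mathcal{H}|\psi_n\rangle$, the middle subtraction term dropping out by orthogonality. The step I expect to be the crux is showing that $\langle\psi_m|\psi_{n+1}\rangle=0$ holds automatically for every $m\le n-2$, with no subtraction supplied for those indices. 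Here I would move $\mathcal{H}$ across the inner product using self-adjointness, $\langle\psi_m|\mathcal{H}|\psi_n\rangle=\langle\mathcal{H}\psi_m|\psi_n\rangle$, and substitute the recursion at index $m$ to write $\mathcal{H}|\psi_m\rangle$ as a combination of $|\psi_{m-1}\rangle,|\psi_m\rangle,|\psi_{m+1}\rangle$. Because $m+1\le n-1$, every term carries an index strictly below $n$, so each overlap with $|\psi_n\rangle$ vanishes by the inductive hypothesis and the claim follows.

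Finally, the same self-adjointness identity closes the coefficient story: inserting the recursion at index $n-1$ into $\langle\psi_{n-1}|\mathcal{H}|\psi_n\rangle=\langle\mathcal{H}\psi_{n-1}|\psi_n\rangle$ leaves only $\langle\psi_n|\psi_n\rangle$, so $\beta_n$ is governed by the norm of a Lanczos vector rather than by an independent matrix element, consistent with the form recorded in the theorem. With orthogonality secured, the span claim is routine: $|\psi_{n+1}\rangle$ agrees with $\mathcal{H}^{n+1}|\Phi\rangle$ up to vectors already lying in the subspace, so it enlarges the basis by exactly one new power, and iterating to $n=N-1$ yields an orthogonal basis of the full Krylov subspace. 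The main obstacle throughout is the automatic vanishing in the inductive step; once one recognizes that self-adjointness is precisely what collapses the full Gram--Schmidt of the power method into a three-term recurrence, the remainder is bookkeeping.
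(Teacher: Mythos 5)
Your proposal is correct and rests on exactly the same mechanism as the paper's (one-sentence) proof: self-adjointness of $\mathcal{H}$ together with the orthogonality of the previously constructed Krylov vectors forces $\langle\psi_m|\mathcal{H}|\psi_n\rangle=0$ for $|m-n|>1$, collapsing full Gram--Schmidt to the three-term form. You simply carry this out as a complete induction (fixing $\alpha_n$ and $\beta_n$ by projection and verifying the span claim) where the paper deliberately offers only the sketch, so the approaches coincide in substance.
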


\begin{proof}[Proof]
If all previous vectors in the Krylov subspace expansion are orthonormal, then $\langle\psi_m|\mathcal{H}|\psi_n\rangle$ is identically zero if $|m-n|>1$ since $\mathcal{H}|\psi_n\rangle$ is contained in the Krylov subspace and $|\psi_m\rangle$ is made to be orthogonal. Note that $\alpha_n$ is not necessarily an eigenvalue of $\mathcal{H}$.
\end{proof}

From the coefficients for the Lanczos recursion, the Green's function can be constructed \cite{senechal2008introduction}. There exist many types of Green's functions \cite{AMST}, but the focus is on one-body Green's function without loss of generality.

\begin{definition}[One-body Green's function]
A one-body Green's function is defined as \cite{AMST}
\begin{equation}
\mathcal{G}_{j\bar\sigma;i\sigma}(\omega)\equiv\langle\Psi|\hat c_{j\bar\sigma}^\dagger\;(\omega-\mathcal{H}\pm i\eta)^{-1}\;\hat c_{i\sigma}|\Psi\rangle
\end{equation}
in the limit as $\eta\rightarrow0$ ($\eta\in\mathbb{R}$) giving the advanced $(-)$ or retarded $(+)$ Green's function. The one-body Green's function is given in terms of raising and lower operators $\hat c^{(\dagger)}_{i\sigma}$ for site $i$ ($j$) and spin $\sigma$ ($\bar\sigma$) and ground state wavefunction $\Psi$ at frequency $\omega$. Only subscripted values $i$ are integer indices, otherwise $\sqrt{-1}\equiv i$.
\end{definition}

\begin{theorem}[Continued fraction representation of the Green's function]
The continued fraction representation of the Green's function, $\mathcal{G}$, of a Hamiltonian $\mathcal{H}$ for non-relativistic systems is
\begin{equation}\label{continuedfrac}
\mathcal{G}_{j\bar\sigma;i\sigma}(\omega)
=\cfrac{\langle\Psi|\hat c^\dagger_{j\bar\sigma}\hat c_{i\sigma}|\Psi\rangle}{\omega-\alpha_0-\cfrac{\beta_1^2}{\omega-\alpha_1-\cfrac{\beta_2^2}
{\ddots}
}}
\end{equation}
where $\alpha_n$ and $\beta_n$ coefficients are defined by a Lanczos recursion algorithm from Eq.~\eqref{lanczosrecursion}
\end{theorem}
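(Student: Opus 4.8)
The plan is to recognize that the Lanczos recursion of Eq.~\eqref{lanczosrecursion} builds an orthonormal basis in which $\mathcal{H}$ is tridiagonal, and then to invert that tridiagonal form level by level. First I would fix the seed of the recursion to be the excited state $|\Phi\rangle\equiv\hat c_{i\sigma}|\Psi\rangle$, so that $|\psi_0\rangle=|\Phi\rangle/\sqrt{\langle\Phi|\Phi\rangle}$ with $\langle\Phi|\Phi\rangle=\langle\Psi|\hat c^\dagger_{i\sigma}\hat c_{i\sigma}|\Psi\rangle$. By Theorem~\ref{basiclanczosrecursion} the matrix elements $T_{mn}\equiv\langle\psi_m|\mathcal{H}|\psi_n\rangle$ vanish whenever $|m-n|>1$, so in the Lanczos basis $\mathcal{H}$ is represented by a real symmetric tridiagonal matrix $T$ with diagonal entries $\alpha_n=\langle\psi_n|\mathcal{H}|\psi_n\rangle$ and off-diagonal entries $\beta_n$. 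Writing $\hat c_{i\sigma}|\Psi\rangle=\sqrt{\langle\Phi|\Phi\rangle}\,|\psi_0\rangle$ and inserting the resolvent in this basis reduces the Green's function to $\mathcal{G}_{j\bar\sigma;i\sigma}(\omega)=\langle\Phi|\Phi\rangle\,[(\omega I-T)^{-1}]_{00}$ in the diagonal case $i\sigma=j\bar\sigma$; the numerator is then exactly $\langle\Psi|\hat c^\dagger_{j\bar\sigma}\hat c_{i\sigma}|\Psi\rangle$ as required by Eq.~\eqref{continuedfrac}.

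The central step is to evaluate the corner element $[(\omega I-T)^{-1}]_{00}$ of the resolvent of a tridiagonal matrix. I would do this by a Schur-complement (block) recursion: partitioning $\omega I-T$ so as to separate its first row and column, the $(0,0)$ entry of the inverse satisfies
\[
[(\omega I-T)^{-1}]_{00}=\cfrac{1}{\omega-\alpha_0-\beta_1^2\,[(\omega I-T^{(1)})^{-1}]_{00}}
\]
where $T^{(1)}$ is the tridiagonal matrix obtained by deleting the first row and column of $T$. Because $T^{(1)}$ has the identical tridiagonal structure with coefficients shifted by one index, the bracketed quantity on the right obeys the same identity. Iterating this self-similar recursion peels off one level of the continued fraction at each step and reproduces Eq.~\eqref{continuedfrac} directly. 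An equivalent route is Cramer's rule together with the three-term recurrence satisfied by the determinants of the leading principal minors of $\omega I-T$, which generates the same convergents.

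The main obstacle I anticipate is rigor in the infinite-dimensional and off-diagonal cases rather than the algebra of the recursion. For a finite Krylov space the continued fraction terminates, but to justify the formal tail $\ddots$ one must control convergence of the continued fraction as the recursion depth grows; this follows from the self-adjointness of $\mathcal{H}$ and positivity of the $\beta_n^2$, which keep the resolvent well defined for $\omega$ off the real axis (i.e. for $\eta\neq0$). The genuinely off-diagonal element ($j\bar\sigma\neq i\sigma$) requires more care, since then the left and right seed vectors differ and a single set of coefficients no longer suffices. I would handle it by polarization, expressing $\langle\Psi|\hat c^\dagger_{j\bar\sigma}(\omega-\mathcal{H})^{-1}\hat c_{i\sigma}|\Psi\rangle$ through diagonal continued fractions built from the symmetric and antisymmetric combinations $(\hat c_{i\sigma}\pm\hat c_{j\bar\sigma})|\Psi\rangle$, each of which falls under the tridiagonal argument above.
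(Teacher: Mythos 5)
Your proposal follows essentially the same route as the paper's proof: represent $\mathcal{H}$ as the tridiagonal matrix of Lanczos coefficients from Eq.~\eqref{LanczosHam}, then obtain the corner element of the resolvent $(\omega-\mathcal{H})^{-1}$ by a Schur-complement (block $2\times2$) partition and iterate the self-similar identity to peel off the levels of the continued fraction. Your additional points --- normalizing the seed $\hat c_{i\sigma}|\Psi\rangle$ to identify the numerator, handling the off-diagonal case by polarization, and remarking on convergence for $\eta\neq0$ --- go beyond the paper's terse proof but match its surrounding discussion (the Hermitian combinations $\hat\Omega_\pm$), so the argument is correct and is the same approach, just more carefully carried out.
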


\begin{proof}[Proof]
Begin with the tridiagonal representation of the Hamiltonian in terms of the Lanczos coefficients by inspection of Eq.~\eqref{lanczosrecursion},
\begin{equation}\label{LanczosHam}
\mathcal{H}=\left(\begin{array}{cccccc}
\alpha_0&\beta_1&0&0&\cdots&0\\
\beta_1&\alpha_1&\beta_2&0&\cdots&0\\
0&\beta_2&\alpha_2&\beta_3&\cdots&0\\
0&0&\beta_3&\alpha_3&\ddots&0\\
\vdots &\vdots & \ddots & \ddots & \ddots &\beta_N\\
0&0&0&0&\beta_N&\alpha_N
\end{array}\right)
\end{equation}
for $N$ iterations of the Lanczos algorithm expressed in the generated set of $\{|\psi_n\rangle\}$ vectors.  Note further that $\omega-\mathcal{H}$ can be expressed by subtracting the Hamiltonian from an identity matrix times $\omega$.  

The task is then find the inverse, where it is noted that an effective 2$\times$2 super matrix can be formed by first isolating only the upper left entry in Eq.~\eqref{LanczosHam} with $\alpha_0$ as
\begin{equation}
\omega-\mathcal{H}=\left(\begin{array}{cc}
\omega-\alpha_0&\beta_1\\
\beta_1&\omega-\mathcal{H}_1
\end{array}\right)
\end{equation}
where the lower right $(N-1)\times (N-1)$ block is a single matrix $\mathcal{H}_1$.  The upper left element denoted by (1,1) of the inverse of $\omega-\mathcal{H}$ is then
\begin{align}
\left[\left(\omega-\mathcal{H}\right)^{-1}\right]_{11}
&=\frac1{(\omega-\alpha_0)-\frac{\beta_1^2}{\omega-\mathcal{H}_1}}
\end{align}
which can be repeated iteratively starting from $\mathcal{H}_1$ to generate Eq.~\eqref{continuedfrac} with $\omega\rightarrow\omega\pm i\eta$.
\end{proof}

 Here, division by a matrix is synonymous with the inverse in all cases since the coefficients $\alpha_n$, $\omega$, and $\beta_n$ are all scalar values. The continued fraction can be evaluated out to a certain order where the coefficients are either small enough or large enough to justify leaving off the next level of the continued fraction \cite{demir2006coulomb}. For example if $\beta_n$ becomes too large or too small, the $n^\mathrm{th}$ level or $(n-1)^\mathrm{th}$ level is well approximated by zero. It is possible to construct an extrapolation of known coefficients \cite{viswanath2008recursion}.

The operator used here $\hat c^\dagger_{j\bar\sigma}\hat c_{i\sigma}$ is to more clearly identify with the common notation in the literature for a 2-point correlation function \cite{AMST}. The more general expression would be for some Hermitian operator $\hat\Omega$, noting that $\hat c^\dagger_{j\bar\sigma}\hat c_{i\sigma}$ is not Hermitian if $i\neq j$ and $\bar\sigma\neq\sigma$.  If a Hermitian $\hat\Omega$ is required in place of $\hat c^\dagger_{j\bar\sigma}\hat c_{i\sigma}$, then both of $(\hat c^\dagger_{j\bar\sigma}\hat c_{i\sigma}\pm\hat c^\dagger_{i\sigma}\hat c_{j\bar\sigma})$ can be computed in Eq.~\eqref{continuedfrac} and added or subtracted. Equally, other operators such as the higher order correlation functions or pairing terms, for example $(\hat c^\dagger_{j\bar\sigma}\hat c^\dagger_{i\sigma}\pm\hat c_{i\sigma}\hat c_{j\bar\sigma})$, can be found.  Note also that a representation in a basis (see discussion in Ref.~\onlinecite{BP20,bakerPRB18}) is used but operators can equally be expressed in real space or in momentum space.

It can also be noted that $\hat c_{i\sigma}$ (or $\hat c^\dagger_{i\sigma}$) is not Hermitian.  One way to force a Hermitian operator is to require two computations: once for $\hat\Omega_+=\hat c^\dagger_{i\sigma}+\hat c_{i\sigma}$ and a second time for $\hat\Omega_-=i(\hat c^\dagger_{i\sigma}-\hat c_{i\sigma})$ and then noting that $2c^\dagger_{i\sigma}=\hat\Omega_+-i\hat\Omega_-$ and $2c_{i\sigma}=\hat\Omega_++i\hat\Omega_-$. The resulting coefficients can be combined according to these rules to generate results for $\hat c_{i\sigma}$ or $\hat c^\dagger_{i\sigma}$ if some procedure cannot directly apply $\hat c_{i\sigma}$.

\begin{figure}[b]
\includegraphics[width=\columnwidth]{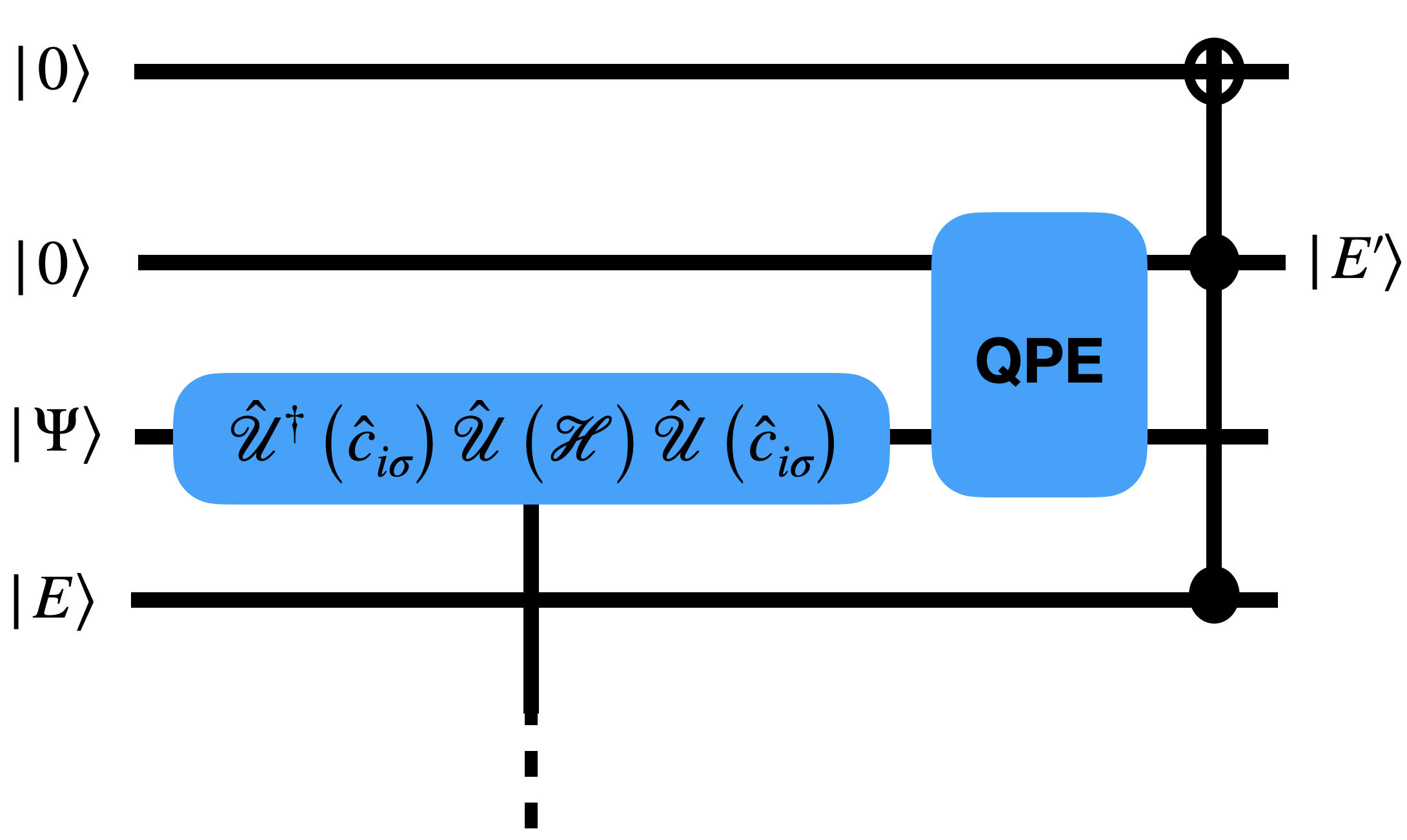}
\caption{\label{Lanczoscounting} 
Modified quantum counting circuit to obtain $\alpha_0$ from input state $\Psi$ with energy $E$.  Vertical lines that descend below the figure account for all auxiliary qubits and recursive steps for $n>0$ including the previous $\alpha_n$ and $\beta_n$ in Eq.~\eqref{lanczosrecursion}, and the application of $\hat c_{i\sigma}$ can be made to be Hermitian operator (see text). Quantum phase estimation (QPE) generates another reference energy $E'$ from the new state. The last step is a comparison of $E'$ and the reference energy $E$ that is converted to a single qubit.}
\end{figure}

\section{Quantum Lanczos recursion}\label{QLR}

QLR can determine the coefficients $\alpha_n$ and $\beta_n$ on the quantum computer, without destroying the initial wavefunction $\Psi$ \cite{temme2011quantum,BP20}.  Both $\alpha_n$ and $\beta_n$ are classical variables throughout and the general strategy is represented in the circuit diagram of Fig.~\ref{Lanczoscounting} and a step-by-step analysis of the operations required are presented in Apx.~\ref{lanczoscoeffex}.

The method known as state-preserving quantum counting (related to quantum amplitude estimation \cite{brassard2002quantum,temme2011quantum,BP20} and Apx.~\ref{history}) is used to find the coefficients $\alpha_n$ and $\beta_n$. The basic strategy of quantum counting is to store the energy of the initial wavefunction, $E$, and compare against the energy after all operations are applied, $E'$.   If the same final state is recovered ($E=E'$) then the count is accepted, and if a recovery procedure is required ($E\neq E'$) to re-obtain the original wavefunction then the count is rejected \cite{marriott2005quantum}.  The ratio of acceptances to total iterations is the expectation value up to a normalization. 

The ground state is given as $\Psi$. The next task to apply a unitary $\hat{\mathcal{U}}(\hat c_{i\sigma})$ which can be equivalent to $\hat\Omega$ from earlier. For the general case, the operator $\hat{\mathcal{U}}$ will always be taken the represent the operator expressed in the argument. The resulting state is $|\Phi\rangle=\hat{\mathcal{U}}(\hat c_{i\sigma})|\Psi\rangle$.  Any errors from applying operators are not considered as in Ref.~\onlinecite{temme2011quantum}.

The counting procedure is illustrated on the first coefficient $\alpha_0$ and then generalized. Applying $\mathcal{H}$ to $|\psi_0\rangle=|\Phi\rangle$ as could be done as in Ref.~\onlinecite{low2019hamiltonian} to give
\begin{equation}\label{HamPsi}
|\mathcal{H}\psi_0\rangle\propto\hat{\mathcal{U}}(\mathcal{H})|\psi_0\rangle=\alpha_0|\psi_0\rangle+\alpha^\perp|\psi_0^\perp\rangle
\end{equation}
where all perpendicular signs ($\perp$) signify a set of states that are all orthogonal to the target state.  For now, the target state is $|\psi_0\rangle$. Since the state $|\psi_0\rangle$ is not an eigenvector of $\mathcal{H}$, note that $|\psi_0\rangle$ must be reverted back to $\Psi$ in order to apply quantum phase estimation (QPE) and find $E'$ since QPE acts on eigenvectors \cite{nielsen2010quantum}.

To do this, a unitary $\hat{\mathcal{U}}^\dagger(\hat c_{i\sigma})$ (which is equal to the unitary representation of $\hat c_{i\sigma}^\dagger$ for this example) is then applied to $|\psi_0\rangle$. The state is then $\alpha_0|\Psi\rangle+\alpha^\perp|\Psi^\perp\rangle$ and suitable for QPE.   Importantly, the leading coefficient is equal to $\alpha_0$ here. This is because all operations are unitary, so $\Psi^\perp$ remains orthogonal to $\Psi$ since unitary operations maintain orthogonality. Thus,  the coefficient $\alpha_0$ in Eq.~\eqref{HamPsi} acting on $|\psi_0\rangle$ and $|\Psi\rangle$ are the same. 

QPE is now used to obtain the ground state energy, $E'$.  The energy is in superposition with the value found for the other perpendicular states, $\Psi^\perp$.  The energies $E$ and $E'$ are compared with a series of CNOT gates to obtain a value of 0 or 1 (accept or reject) on a single pointer qubit \cite{nielsen2010quantum,brassard2002quantum,temme2011quantum}. The pointer qubit is measured.  If the first $d$-bits of the energy match (the other bits are ignored since they are affected by measurement), the value is accepted and $\Psi$ was recovered.  The restriction on $d$ is that the accepted difference $E'-E$ is less than the difference between $E$ and the first excited state \cite{temme2011quantum}.

If the reject outcome is found, then some state in $\Psi^\perp$ was measured and then a recovery procedure is performed until the original wavefunction is obtained, which is explained in Refs.~\onlinecite{brassard2002quantum}, \onlinecite{temme2011quantum}, and Apx.~\ref{lanczoscoeffex}.   The ratio of successes to the total number of iterations is $\alpha_0$ up to a chosen normalization factor when $\hat{\mathcal{U}}$ was determined. The coefficients can be output to a classical user and put back onto another register as a classical number or incorporated in the next operator directly. On the circuit diagram, they are given as another register.

In general ($n\geq0$), an operator $\hat G_n$ can be defined to implement the Lanczos recursion from Thm.~\ref{basiclanczosrecursion} as
\begin{equation}\label{Gdef}
|\psi_n\rangle=\hat G_n|\Psi\rangle
\end{equation}
where there is an assumed dependence of $\hat G_n$ on $\hat\Omega$ and all coefficients indexed by $m$ for $m\leq n$ of a chosen $n$ that are used to determine $\hat G_n$.  The first two operators are
\begin{align}
\hat G_1&=\hat{\mathcal{U}}\left(\mathcal{H}-\alpha_0\right)\hat{\mathcal{U}}(\hat c_{i\sigma})=\hat{\mathcal{U}}\left(\mathcal{H}-\alpha_0\right)\hat \Omega\\
\hat G_2&=\hat{\mathcal{U}}(\left(\mathcal{H}-\alpha_1\right)\left(\mathcal{H}-\alpha_0\right)-\beta_1)\hat \Omega
\end{align}
as can be seen from Apx.~\ref{lanczoscoeffex}.  Defining
\begin{equation}\label{alphaG}
\alpha_n\propto\langle\Psi|\hat G^\dagger_n\;\hat{\mathcal{U}}(\mathcal{H})\,\hat G_n|\Psi\rangle
\end{equation}
demonstrates how the state-preserving quantum counting algorithm can be used by applying the operator $(\hat G^\dagger_n\;\hat{\mathcal{U}}(\mathcal{H})\,\hat G_n)$ onto $|\Psi\rangle$. A proportionality is used here as in Apx.~\ref{lanczoscoeffex}, because of an overall normalization factor.

Whether there is an advantage in applying the operators individually with $\hat{\mathcal{U}}$ operators or as one single operator with $\hat G_n$ operators is problem and implementation dependent. The combined operators $\hat G_n$ are more compact, so the presentation here will use that form. This is highly dependent on what combination of unitaries is used to represent the operator for a given problem, and so will not be discussed here.

In order to find the $\beta_n$ coefficients, note that
\begin{equation}\label{nextbeta}
\beta_n=\langle\psi_{n-1}|\mathcal{H}|\psi_n\rangle=\langle\psi_n|\mathcal{H}|\psi_{n-1}\rangle
\end{equation}
by direct evaluation of Eq.~\eqref{lanczosrecursion}. This term can also be estimated with quantum counting.  Instead of seeking $|\psi_n\rangle$ from $|\mathcal{H}\psi_n\rangle$ as in Eq.~\eqref{alphaG}, $|\psi_{n-1}\rangle$ is sought,
\begin{equation}
\beta_n\propto\langle\Psi|\hat G^\dagger_{n-1}\;\hat{\mathcal{U}}(\mathcal{H})\,\hat G_n|\Psi\rangle
\end{equation}
or the Hermitian conjugate of the operator. Similarly to the $\alpha_n$ coefficient, the operator to count $\beta_n$ from $\Psi$ can be found by repeated application of $(\hat G^\dagger_{n-1}\;\hat{\mathcal{U}}(\mathcal{H})\,\hat G_n)$.

One cycle of the full algorithm is:\\
\patbox{
\textbf{Quantum Lanczos recursion: Algorithm for the Green's function}

\begin{enumerate}
 \item Obtain the ground state wavefunction $\Psi$. 
 \item Apply $\hat c_{i\sigma}$ to $\Psi$ giving $|\psi_0\rangle=\hat c_{i\sigma}|\Psi\rangle$. 
 \item Start iteration $n$, starting at $n=0$.
 \item Compute the $n$th vector from Thm.~\ref{basiclanczosrecursion}. 
 \item Use quantum counting to find $\alpha_n=\langle\psi_n|\mathcal{H}|\psi_n\rangle$. 
 \item Find $\beta_n$ from Eq.~\eqref{nextbeta} with quantum counting.
 \item Repeat from Step 3, incrementing $n$ for a certain number of iterations or until some criterion (for $\beta_n$ or the energy) is reached.  
 \item Use quantum counting for $\langle\Psi|\hat c_{j\bar\sigma}^\dagger\hat c_{i\sigma}|\Psi\rangle$.
\end{enumerate}
}

A stopping criteria can be applied either when the $\beta_n$ coefficients become less than a defined tolerance or a specified number of iterations is accomplished. Note also that, solving Eq.~\eqref{LanczosHam} shows the convergence in energy, which can be another criteria for determining how many levels of the continued fraction are sufficient.

In principle, one could substitute the counting for measurement (as may be useful for near-term investigations), but since $\Psi$ is preserved at the end of this computation, it is ready without re-preparation \cite{temme2011quantum}.  Another coefficient for the continued fraction or component of the Green's function can be found. Alternatively, another $\Psi$ close-by can be solved, similar to the strategy introduced in the recycled wavefunction for minimal prefactor (RWMP) methods from Ref.~\onlinecite{BP20}. This method could also be applied in other cases of interest (for example, relativistic few-body quantum physics \cite{demir2006coulomb}) or different recursion relations with different coefficients \cite{boas2006mathematical}.  

\section{Lanczos for the ground states}\label{groundstate}

The QLR could be used to obtain the ground state.  Consider that each element of the Krylov subspace is of the form $\hat G_n|\Psi\rangle$ as in Eq.~\eqref{Gdef}, with $\hat G_n$ containing some combination of Hamiltonians and Lanczos coefficients $\alpha_n$ and $\beta_n$. In order to do this, it is necessary to represent $\hat G_n$ as a linear combination of unitaries and use a technique to apply the operator \cite{kothari2014efficient}, just as before.

The starting initial state $|\Xi\rangle$ is a known eigenstate of some base Hamiltonian, $\mathcal{H}_0$. This is the same starting assumption that is used for real time evolution (RTE) in that some eigenstate is prepared on the quantum computer before any operations are performed. The Lanczos coefficients are generated just as before, but $\mathcal{H}_0$ is used in the QPE step.  Once all Lanczos coefficients are determined, the tridiagonal Hamiltonian from Eq.~\eqref{LanczosHam} is diagonalized on the classical computer \cite{press1992numerical} and the weights of the ground state from the Krylov basis are determined.  The weights of the lowest lying eigenvector of Eq.~\eqref{LanczosHam} are denoted as $\gamma_n$ and defined as
\begin{equation}
|\Psi\rangle=\sum_n\gamma_n|\psi_n\rangle
\end{equation}
where $|\psi_n\rangle$ is now determined from the starting state $|\Phi\rangle=|\Xi\rangle$ instead of $|\Psi\rangle$ as was the case for the Green's function. To express the operator needed to apply to the state $|\Xi\rangle$, the $\gamma_n$ coefficients are used in combination with the $\hat G_n$ operators to form
\begin{equation}\label{groundstateY}
\hat Y \equiv \sum_n\gamma_n\hat G_n
\end{equation}
and when applied on $|\Xi\rangle$, gives the ground state of the quantum problem $|\Psi\rangle$,
\begin{equation}
\hat Y|\Xi\rangle=|\Psi\rangle
\end{equation}
or is approximately equal when using a truncated Krylov basis.  Note that a linear combination of a linear combination of operators is still just a linear combination of operators, so the same tools to apply operators used previously apply here.   Since the ground state energy is also determined when diagonalizing Eq.~\eqref{LanczosHam}, both the ground state and energy are available for the QLR for the Green's function without an extra QPE at the start.  This also provides a means to determine convergence, which must be sufficient for anticipated operators acting on the resulting wavefunction. For example, QPE requires a certain level of precision to obtain good eigenvalues \cite{nielsen2010quantum}.

Note that only the QPE to determine $E$ and $E'$ uses the base Hamilonian $\mathcal{H}_0$.  The Hamiltonian used in the recursion relations ($\hat G_n$ operators) is of the form
\begin{equation}\label{tuneH}
\mathcal{H}(\lambda)=\mathcal{H}_0+\lambda\mathcal{H}_1
\end{equation}
for some tuning parameter $\lambda$. This strategy is similar to the form used for RTE \cite{BP20}, and necessary when the number of Lanczos coefficients is not sufficient to obtain an accurate ground state energy.  Hence, a series of Hamiltonians can be chained together.  Lanczos is known to be rapidly convergent \cite{senechal2008introduction}, so the $\lambda$-step required may be larger than RTE since the adiabatic evolution of a wavefunction requires a small time step \cite{poulin2014trotter}.

The full algorithm for the ground state is:\\
\patbox{
\textbf{Quantum Lanczos recursion: Algorithm for the ground state}

\begin{enumerate}
 \item A starting state $|\Xi\rangle$ is chosen with some base Hamiltonian, $\mathcal{H}_0$.
 \item Start a counter at $n=0$. Determine $\alpha_0$.
 \item Construct the operator $\hat G_n$ representing the $n$th term in Thm.~\ref{basiclanczosrecursion} with the Hamiltonian from Eq.~\eqref{tuneH}.
 \item $\hat G_n$ is applied onto the current state.
 \item Quantum counting is used to determine the next Lanczos coefficients.
 \item With the new coefficients, the algorithm returns to step 3, incrementing $n$ by one.
 \item The Hamiltonian from Eq.~\eqref{LanczosHam} can be diagonalized to give the coefficients of the ground state eigenvector $\{\gamma_n\}$ and energy. The operator $\hat Y$ from Eq.~\eqref{groundstateY} is formed and applied on state $|\Xi\rangle$.
 \item (optional) A new Hamitonian can be solved from the new ground state wavefunction.
\end{enumerate}
}

Some additional detail on known methods to apply the operators is provided in Apx.~\ref{oblivious}. The determination of the $\gamma_n$ coefficients could be done on either the quantum or classical computer, albeit with extra qubit overhead on the quantum computer. Since $\alpha_n$ and $\beta_n$ are classical variables and output to the classical user in the simplest case, the matrix can be directly represented and diagonalized.

Note that solving for ground states is known to be QMA-complete in general \cite{schuch2009computational} and that approximations of the expectation values in the quantum counting and the use of the truncated Krylov basis appear in the QLR. The largest complexity issue is still the same as RTE where the time step needed in general is too complicated to know in advance for every problem \cite{BP20}.  


 

\section{Relation to other methods}\label{comparison}

Previously known methods are now compared and contrasted on both the classical and quantum computers with the QLR method. 

\subsection{Comparison with classical techniques}\label{classicalcomp}

On the classical computer, it becomes difficult to construct a continued fraction representation that extends beyond 10 or so levels.  Exact diagonalization \cite{wietek2018sublattice} gives the Lanczos coefficients to numerical precision, but since the system sizes are limited by memory, only a few floors are possible.

The computation of continued fractions has been explored with tensor networks \cite{schollwock2005density,kuhner1999dynamical}; however, it has been found that the numerical truncation procedure (singular value decomposition \cite{baker2019m}) decreases the normalization of the wavefunction ansatz and the reachable number of accurate floors is nearly the same as exact diagonalization \cite{foleyreseaux}. Hence, the continued fraction representation is not as useful on a classical computer. More generally, the issue of numerical stability is an issue for Lanczos techniques on the classical computer \cite{cahill2000numerical}.

Because the quantum wavefunction will not be affected by the same limits as the classical algorithms (namely, the quantum wavefunction does not grow as more operations are performed and results are accurate out to quantum precision), the quantum advantage here is an exponential reduction in memory \cite{nielsen2010quantum} (see also the introduction relating to the extended Church-Turing thesis in Ref.~\onlinecite{aaronson2011computational}). Surpassing the limitations given by numerical instabilities of Lanczos on the classical computer means that even small systems will see a quantum advantage.

\subsection{Comparison with quantum techniques}\label{quantumcomp}

While the costs of the quantum computing algorithm are drastically cheaper than the classical equivalent algorithms, comparison with other quantum algorithms is now discussed here, with particular attention paid to Refs.~\onlinecite{rall2020quantum,tong2020fast,dallaire2016method,dallaire2016quantum,bauer2016hybrid}. Each of these relies on multiple measurements of the wavefunction, and therefore incur expensive wavefunction preparation time.  To formalize in the big-$O$ notation, the RWMP methods \cite{BP20} (including the QLR here) have $O(1)$ queries for the wavefunction and this is less than other algorithms.  A more rigorous comparison for doing fewer wavefunction preparations compared to fewer oracle queries of relevant operators is not possible to be conducted in general.  However, it is expected that wavefunction preparation will be far more costly with RTE, so the difference between this method and other algorithms can be weeks, months, or decades \cite{poulin2014trotter,lemieux2020resource}. 

In regards to Green's function, the discussion throughout this paper is done with regards to general Hamiltonians so that Green's functions for arbitrary systems can be determined. Some methods reduce a Hamiltonian to an impurity problem and using the quantum computer for the solution of the impurity problem may be advantageous if only to decrease the number of operations required \cite{dallaire2016method,dallaire2016quantum,bauer2016hybrid}. However, in those previous algorithms for impurity problems (and many other publications), the determination of expectation values ({\it e.g.}, $\langle\Psi|(\hat c_{i\sigma}+\hat c_{i\sigma}^\dagger)|\Psi\rangle$ or some other measurement that is not an eigenstate) will require several measurements to acquire results in general  and therefore contribute to even more wavefunction preparations. Those algorithms will not be feasible without an improved wavefunction method at scale, which most likely can not be done in general due to the complexity of the problem \cite{schuch2009computational}.

With regards to ground state wavefunction preparation with Lanczos routines, some previous work by Motta, {\it et.~al.} in Ref.~\onlinecite{motta2020determining} focused on using a Lanczos algorithm to find the wavefunction with even elements of the Krylov basis.  The implementation here is different since the full or truncated set of Lanczos coefficients is derived from state-preserving quantum counting, all Lanczos basis states are used instead of just even functions, and there is no use of imaginary time evolution.

\section{Conclusion}

Constructing the continued fraction representation of an arbitrary Green's function by implementing Lanczos recursion on a quantum computer was shown to efficiently generate all the coefficients necessary with a quantum counting algorithm. This method is an exponential improvement in terms of memory storage compared to classical methods, and the method preserves the wavefunction for the next computation. This can then be used to find another element of the interacting Green's function or another ground state nearby can be sought from the closer starting state. The ground state is also determinable with this method, provided an initial state is prepared from a known Hamiltonian.

\section{Acknowledgements}

The author is grateful for conversation with Anirban N.~Chowdhury, David Sénéchal, Alexandre Foley, Christopher Chubb, Guillaume Duclos-Cianci, Maxime Tremblay, and David Poulin. Funding for this work was provided by the postdoctoral fellowship from Institut quantique and Institut transdisciplinaire d'information quantique (INTRIQ). This research was undertaken thanks in part to funding from the Canada First Research Excellence Fund (CFREF).

The author is grateful to the US-UK Fulbright Commission for financial support under the Fulbright U.S. Scholarship programme as hosted by the University of York.  This research was undertaken in part thanks to funding from the Bureau of Education and Cultural Affairs of the United States Department of State.

\begin{appendix}

\section{Quantum Lanczos recursion}\label{lanczoscoeffex}

This section goes into more detail with the algorithm presented in the main text.  These steps mainly follow the presentation of the Supplemental Material for Ref.~\onlinecite{temme2011quantum} but presents a few examples for the present case of interest. Unlike the main text where operators $\hat{\mathcal{U}}$ were used to denote the unitary form of an operator, whenever an operator is applied here, it will be assumed that this can be done with unitary operators to keep the notation simple.

The starting point begins after a ground state wavefunction is found and then applying quantum phase estimation (QPE) to determine the energy $E$.  The resulting wavefunction on the quantum computer, $|\varphi\rangle$, after the two initial steps is
\begin{equation}
|\varphi\rangle=|\Psi\rangle|E\rangle|0\rangle|0\rangle.
\end{equation}

At the very first step, the $n=0$ term of the Lanczos recursion must be applied.  For this first step, $\beta_{0}$ is taken to be zero, so all that must be found is $\alpha_0=\langle\psi_0|\mathcal{H}|\psi_0\rangle$ with $|\psi_0\rangle=\hat\Omega|\Psi\rangle$. 

The next action is to form $|\psi_0\rangle=\hat\Omega|\Psi\rangle$,
\begin{equation}
\hat\Omega|\varphi\rangle=|\psi_0\rangle|E\rangle|0\rangle|0\rangle.
\end{equation}
Explicitly, it is assumed that the operation of $\hat\Omega$ is well defined in the sense that a unitary is applied according to the discussion in the main text for $\hat\Omega$. If the determinant of $\hat\Omega$ is not one, then the operator can be divided by an integer to ensure the application of the operator will return a resulting probability on the interval [0,1]. Shifts by constants may also be necessary depending on the operator requested.  All that needs to be kept track of at the end of the computation is that the final result must be unnormalized and un-shifted to obtain the true expectation value.

The quantum counting procedure now begins:
\begin{enumerate}
\item The Hamiltonian is properly modified and applied to obtain
\begin{align}
\mathcal{H}\hat\Omega|\varphi\rangle&=\mathcal{H}|\psi_0\rangle|E\rangle|0\rangle|0\rangle\\
&=\Big(\alpha_0|\psi_0\rangle+\alpha^\perp|\psi_0^\perp\rangle\Big)|E\rangle|0\rangle|0\rangle.
\end{align}
Each term of the Hamiltonian is written as a series of multi-qubit operators and also noting that the requirement of Hermiticity of the Hamiltonian ensures that each term's Hermitian conjugate also appears in the Hamiltonian. In the main text, care was taken to apply a unitary representing all operators with a symbol $\mathcal{U}$, but this will be omitted here for clarity.  It is sufficient for the purposes of this paper to say that existing methods accomplish the application of an operator to a wavefunction controllably without excessive overhead, making the application of $\mathcal{H}$ or a similar operator to $\Psi$ feasible. However, for review, some methods are discussed in Apx.~\ref{oblivious}.

The Hamiltonian operator is normalized so the operator has norm 1, similar to the treatment of  $\hat\Omega$. The result from the normalized unitary will be bound to the interval [0,1] representing a probability (the accepted counts divided by the total number of times that the algorithm was run) which can then be used to recover the proper value by multiplying the result by the normalization factor that was used. To ensure a positive result from the Hamiltonian, the external potential can be shifted by a constant. Shifting the potential in a Hamiltonian by a constant factor does not change the eigenvectors. 

\item The QPE algorithm \cite{nielsen2010quantum} is used here again to find the energy of the new state on the first register, but the result is contained in a superposition giving
\begin{align}
&\hat{\mathcal{U}}_\mathrm{QPE}\hat\Omega^\dagger\mathcal{H}\hat\Omega|\varphi\rangle=\nonumber\\&\Big(\alpha_0|\Psi\rangle|E\rangle+\alpha^\perp|\Psi^\perp\rangle|E^\perp\rangle\Big)|E\rangle|0\rangle
\end{align}
The amplitudes of the two superpositions match because each element of the superposition had the same operation applied.  So, the energy corresponding to the appropriate wavefunction ($\psi_0$ or $\psi_0^\perp$) with probability $\alpha_0$ or $\alpha^\perp$.  For the sake of conciseness, the energy resulting from this step ($E$ or $E^\perp$) is referred to as $E'$.
\item The operator $\hat\Omega^\dagger$ is applied to revert the state $|\psi_0\rangle$ back to the original ground state wavefunction $\Psi$ in anticipation of a QPE being taken.
\item Through a series of bitwise operations provided by a CNOT gate (see Fig.~\ref{CNOT}), the energies in both registers are converted to a single comparison bit with value of 0 (matching) or 1 (not matching).  The output of the CNOT gate results in 
a single qubit represents the matching or not-matching result.

\begin{figure}[b]
\centering

\begin{minipage}{0.08\textwidth}
\centering
\begin{tabular}{|c|c||c|}
\hline
$a$ & $b$ & $a\oplus b$\\
\hline
0 & 0 & 0\\
0 & 1 & 1\\
1 & 0 & 1\\
1 & 1 & 0\\
\hline
\end{tabular}
\end{minipage}
\begin{minipage}{0.39\textwidth}
\includegraphics[width=0.6\columnwidth]{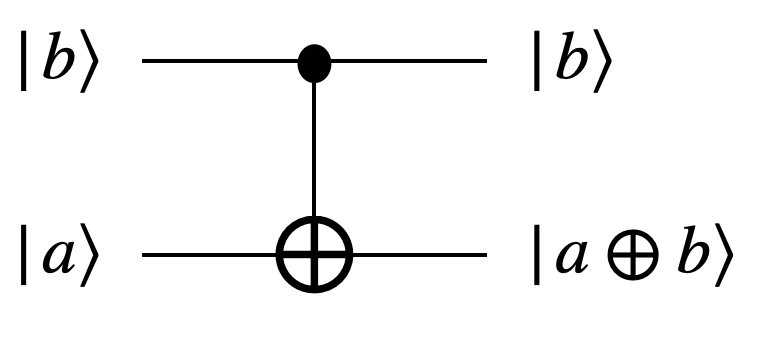}
\end{minipage}

\caption{Controlled-NOT (CNOT) gate taking inputs $a$ and $b$ and returning $a\oplus b$ \cite{nielsen2010quantum}. A table of shown on the left for all types of binary inputs and the schematic is shown on the right. This gate provides a comparator between input bits on the $a$ and $b$ valued qubits.  \label{CNOT}}

\end{figure}

The wavefunction is now
\begin{align}
&\hat{\mathcal{U}}_\mathrm{CNOT}\hat{\mathcal{U}}_\mathrm{QPE}\hat\Omega^\dagger\mathcal{H}\hat\Omega|\varphi\rangle=\nonumber\\&\Big(\alpha_0|\Psi\rangle|E\rangle|0\rangle+\alpha^\perp|\Psi^\perp\rangle|E^\perp\rangle|1\rangle\Big)|E\rangle
\end{align}
\item The single comparison qubit (also called a pointer qubit) is measured.  Measuring only one bit avoids a full collapse of the wavefunction, which is not advantageous here since this means $\psi_0$ would need to be completely re-prepared, which is the main issue that the algorithm avoids.

The algorithm then forks with two possibilities.
\item (Accept) If a value of 0 is measured, then the original wavefunction is recovered and the algorithm can continue at step 1.   A counter is incremented by one.  
\item (Reject) If a value of 1 is measured on the pointer qubit, then a state in the set $\psi_0^\perp$ was recovered.  A recovery procedure that is explained below is then performed to recover $\psi_0$ before returning to step 1.
\end{enumerate}

The ratio of successful iterations of the algorithm to the total number of times that the algorithm is run approximates the coefficient $\alpha_0$.

\subsection{Recovery of the original state}\label{recoveryexplain}

On some applications of the counting algorithm, the wrong energy ($E^\perp$) will be found.  The basic idea of the steps necessary to recover the original wavefunction is that the operators are undone and repeated until the original wavefunction is recovered.

If the result of the pointer qubit is found to be $|1\rangle$, then the resulting wavefunction is 
\begin{equation}
|\Psi^\perp\rangle|E\rangle|E^\perp\rangle|1\rangle.
\end{equation}
The main focus of the recovery operation is on the $|\Psi^\perp\rangle|E^\perp\rangle$ parts of the state.

First, the QPE resulting in $E^\perp$ is undone by applying the adjoint of all unitary operations.  The operators (for example, $\hat G_n$ from the main text) are then applied again to the state $\Psi^\perp$ and then another QPE is taken.  This gives back the state
\begin{equation}
\Big(\gamma|\Psi\rangle|E\rangle|0\rangle+\delta|\Psi^\perp\rangle|E^\perp\rangle|1\rangle\Big)|E\rangle
\end{equation}
after applying the comparator CNOT operations. The pointer qubit is measured.  The process then repeats until the pointer qubit is measured to be $0$, signifying a match between $E$ and $E'$.  This completes the recovery procedure since $\Psi$ is recovered. The algorithm then returns to step 1.

\subsubsection{Precision required}

Measuring the lone pointer qubit introduces some error into the wavefunction. By only requesting the first $d$ digits of the energies match, two goals are accomplished. First, the errors from measurement are contained to only a few digits. Second, the QPE is faster since it involves fewer operations when it is evaluated out to fewer digits of precision. As a consequence of the first point, the number of digits should prevent us from overlapping with excited states. 

\subsubsection{Cases with degeneracy}

As pointed out in the Supplemental Material of Ref.~\onlinecite{temme2011quantum}, the case where a degeneracy appears must be resolved with several QPE steps.  Particularly, $\nu$ processes are run, outputting to $\nu+1$ pointer qubits.  The average energy of this ensemble converges to the average energy requested out to the precision sought. Note that this would be useful for analyzing thermal states which is in a similar spirit to the Monte Carlo sampling in Ref.~\onlinecite{temme2011quantum}.

\subsubsection{Number of iterations to recover the wavefunction}

There is a question of how many iterations are required in a given recovery procedure to obtain $|\Psi\rangle$ again having projected the state into $|\Psi^\perp\rangle$.  This was answered in considerable detail in Ref.~\onlinecite{temme2011quantum}, and the same mapping onto the positive projection operators can be used to obtain
\begin{equation}
p_\mathrm{fail}\lesssim1/({2e(k+1)})
\end{equation}
for $k$ iterations of the recovery algorithm and a probability of failure of $p_\mathrm{fail}$.  The number of iterations can be increased until recovery here.

\subsection{Higher order terms in the recursion}

The previous section demonstrated the method for the $n=0$ term.  For $n>0$, the recursion relation can be exploited to give a single operator acting on $|\psi_0\rangle$.  The first few are listed here:
\begin{align}\label{alphaforms1}
|\psi_1\rangle&=\left(\mathcal{H}-\alpha_0\right)\hat\Omega|\Psi\rangle\\
\label{alphaforms2}
|\psi_2\rangle&=\left(\mathcal{H}-\alpha_1\right)|\psi_1\rangle-\beta_1|\psi_0\rangle\\
&=\Big(\left(\mathcal{H}-\alpha_1\right)\left(\mathcal{H}-\alpha_0\right)-\beta_1\Big)\hat \Omega|\Psi\rangle \nonumber\\
\label{alphaforms3}
|\psi_3\rangle&=\left(\mathcal{H}-\alpha_2\right)|\psi_2\rangle-\beta_2|\psi_1\rangle\\
&=\Big[\left(\mathcal{H}-\alpha_2\right)\Big(\left(\mathcal{H}-\alpha_1\right)\left(\mathcal{H}-\alpha_0\right)-\beta_1\Big) \nonumber\\
&\quad-\beta_2\left(\mathcal{H}-\alpha_0\right)\Big]\hat\Omega|\Psi\rangle \nonumber
\end{align}
Again, each operator can be normalized and then applied as discussed previously.  Note that only one wavefunction was necessary at each step but that the operator changes.

\subsection{Determination of $\beta_n$ coefficients}

Once $\alpha_n$ is determined, the $\beta_n$ coefficient must be determined. The operators necessary for the $\beta_n$ coefficients follow a similar form except instead of taking the conjugate of the operator listed in Eq.~(\ref{alphaforms1}--\ref{alphaforms3}) or a higher order term, the operator listed (for the $n$th term) and the previous operator ($n-1$ term) is used. 

To illustrate this method, the $\beta_1$ coefficient will be demonstrated. As pointed out in the text from a straightforward utilization of the Lanczos recursion relation, the $\beta_n$ coefficients are
\begin{equation}\label{betacoeff}
\beta_n=\langle\psi_{n-1}|\mathcal{H}|\psi_n\rangle=\langle\psi_n|\mathcal{H}|\psi_{n-1}\rangle.
\end{equation}
For the case of $\beta_1$, the operator $\left(\mathcal{H}-\alpha_0\right)\hat\Omega$ is applied to $|\Psi\rangle$ with the additional $\mathcal{H}$ applied as well, finally giving
\begin{align}
\beta_1&=\langle\psi_0|\mathcal{H}(\mathcal{H}-\alpha_0)\hat\Omega|\Psi\rangle\\
&=\langle\Psi|\hat\Omega^\dagger\mathcal{H}(\mathcal{H}-\alpha_0)\hat\Omega|\Psi\rangle.
\end{align}
Thus, for $\beta_1$, the operator $\hat\Omega^\dagger\mathcal{H}(\mathcal{H}-\alpha_0)\hat\Omega$ must be applied under the same constraints as before to discover the coefficient.  

\subsection{Application of operators to wavefunctions}\label{oblivious}

In the discussion so far, one step was omitted because it has appeared in several places already in the literature (see for example Refs.~\onlinecite{low2019hamiltonian,childs2017quantum,paetznick2013repeat}). Some proposals to apply operators onto the wavefunction are reviewed here.

A generic operator $\hat M$ can be expressed as a linear combination of unitary operators \cite{low2019hamiltonian,childs2017quantum}
\begin{equation}\label{LCU}
\hat M=\sum_{\ell=1}^\tau q_\ell\hat U_\ell
\end{equation}
for some set of unitaries $\hat U_\ell$ with coefficients $q_\ell$. For an example of how operators can be applied to wavefunctions the repeat-until-success algorithm \cite{paetznick2013repeat} implements a unitary operator controlled on $\ell$ auxiliary qubits, one for each term in the expansion of Eq.~\eqref{LCU}. Formally, applying an operator $\hat M$ onto a wavefunction $\Psi$ gives \cite{paetznick2013repeat}
\begin{equation}
\hat M|\Psi\rangle|0\rangle^{\otimes \ell}=\rho|\hat M\Psi\rangle|0\rangle^{\otimes \ell}+\rho^\perp|(\hat M\Psi)^\perp\rangle |q\rangle^{\otimes \ell}
\end{equation}
which has a similar form to Eq.~\eqref{HamPsi} but where $\ell$ auxiliary qubits were required to implement the unitary $\hat\Omega$ and $q$ is a value 0 or 1 but not all uniform 0. The unitaries are provided one auxiliary qubit each in the linear combination of unitaries representing $\hat M$. Measuring those $\ell$ (pointer) qubits determines whether the correct operator was applied (all 0s, for example) or if the resulting state resides in the perpendicular space (not all 0s). Otherwise, just as in the quantum counting algorithm from Apx.~\ref{recoveryexplain}, a recovery procedure is used to obtain the initial state again only to reapply the operators until success is found.

The probability of obtaining the correct state upon measurement can be improved by using oblivious amplitude amplification (OAA) \cite{kothari2014efficient}, a similar method to quantum amplitude amplification \cite{brassard1998quantum} or Grover's search \cite{grover1996fast}. The state $|0\rangle^{\otimes \ell}$ is the target state sought in OAA, hence the state remains `oblivious' to the input wavefunction \cite{kothari2014efficient,berry2015simulating}. Doing this procedure increases the probability of obtaining the correct state to near certainty.

\section{Overview of quantum counting}\label{history}

The origins of quantum counting for quantum algorithms will be discussed here in the context of Refs.~\onlinecite{brassard1998quantum,brassard2002quantum,marriott2005quantum,nagaj2009fast,temme2011quantum} and the Supplemental Material of Ref.~\onlinecite{temme2011quantum}. The main reason for this review of the history of these methods is that some names of the methods have blended together through time and it is necessary to contrast the algorithm here with other algorithms with similar sounding names. The use of the name ``quantum counting" here matches the colloquially used name of this algorithm, particularly by one of the authors of Ref.~\onlinecite{temme2011quantum} \cite{davidPconv}. This appendix reviews the quantum algorithms and shows that the concepts behind amplitude estimation, counting, and QMA-sampling are very similar.

\paragraph*{Quantum counting.--}An algorithm titled `quantum counting' was introduced in Ref.~\onlinecite{brassard1998quantum}.  This algorithm counts the number of elements in a superposition that satisfies some criterion. The algorithm is similar to Grover's algorithm with one or more states sought. 

\paragraph*{Quantum amplitude estimation.--}It is remarked by Brassard, {\it et.~al.} in Ref.~\onlinecite{brassard1998quantum} that the quantum counting algorithm described above is related to the general method of quantum amplitude estimation (QAE).  QAE is defined in Ref.~\onlinecite{brassard2002quantum} as the process by which a state (unnormalized) that is separated into ``good" ($\Psi_1$) and ``bad" ($\Psi_0$) states
\begin{equation}\label{QAE1}
|\Psi\rangle=|\Psi_1\rangle+|\Psi_0\rangle
\end{equation}
can identify the amplitude $\mathcal{A}=\langle\Psi_1|\Psi_1\rangle$. Note the similarity of this form to the target and perpendicular ($\perp$) states in the main text.

\paragraph*{QMA-amplification (state-preserving quantum counting).--} In 2005, Marriott and Watrous introduced QMA-amplification \cite{marriott2005quantum}. QMA-amplification is the strategy that is used here that is different than quantum counting with Grover's search, but the algorithm still determines the expectation value by counting the number of transitions from a state back to that same state on application of an operator.   The essential idea is the same as the algorithm in this paper and the methods there are the direct inspiration here.

Notably, the QMA-amplification algorithm was used to implement a quantum Metropolis sampling algorithm in Ref.~\onlinecite{temme2011quantum}, essentially replacing the copy operation required in classical implementation of Monte Carlo sampling.  The distinction made in this paper (and others \cite{nagaj2009fast}) is to also identify the Watrous and Marriott method as state-preserving quantum counting. In this case, counting transitions is not exactly the same as counting the number of states in a superposition, but both algorithms rely on counting with quantum physics, hence the names.

Summarily, ``quantum counting" from Ref.~\onlinecite{brassard1998quantum} counts the number of states satisfying some criterion in a superposition. Meanwhile, ``quantum counting" in the sense of QMA-amplification counts the number of times a state is recovered when applying an operator \cite{marriott2005quantum,temme2011quantum}. The core difference between the two methods is that the amplitude is interpreted as a weight of a given state instead of a sum of several states. The algorithm from Ref.~\onlinecite{marriott2005quantum} (and particularly the use in Ref.~\onlinecite{temme2011quantum}) will be referred to as ``state-preserving quantum counting" or just quantum counting when there is no potential for ambiguity.  

\subsection{Use of quantum counting for quantum chemistry}

As to the application of quantum counting in quantum chemistry, it was remarked in the Supplemental Information of Ref.~\onlinecite{temme2011quantum} that the counting algorithm can be used as an oracle query to find derivatives of the energy to relax molecular structures \cite{kassal2009quantum}.  The oracle query is needed for the quantum gradient algorithm of Ref.~\onlinecite{jordan2005fast}. So, the method has been proposed for quantum chemistry previously \cite{BP20}, but it is shown to be strongly applicable to Lanczos here.

\end{appendix}

\bibliography{QLR}

\begin{thebibliography}{47}%
\makeatletter
\providecommand \@ifxundefined [1]{%
 \@ifx{#1\undefined}
}%
\providecommand \@ifnum [1]{%
 \ifnum #1\expandafter \@firstoftwo
 \else \expandafter \@secondoftwo
 \fi
}%
\providecommand \@ifx [1]{%
 \ifx #1\expandafter \@firstoftwo
 \else \expandafter \@secondoftwo
 \fi
}%
\providecommand \natexlab [1]{#1}%
\providecommand \enquote  [1]{``#1''}%
\providecommand \bibnamefont  [1]{#1}%
\providecommand \bibfnamefont [1]{#1}%
\providecommand \citenamefont [1]{#1}%
\providecommand \href@noop [0]{\@secondoftwo}%
\providecommand \href [0]{\begingroup \@sanitize@url \@href}%
\providecommand \@href[1]{\@@startlink{#1}\@@href}%
\providecommand \@@href[1]{\endgroup#1\@@endlink}%
\providecommand \@sanitize@url [0]{\catcode `\\12\catcode `\$12\catcode
  `\&12\catcode `\#12\catcode `\^12\catcode `\_12\catcode `\%12\relax}%
\providecommand \@@startlink[1]{}%
\providecommand \@@endlink[0]{}%
\providecommand \url  [0]{\begingroup\@sanitize@url \@url }%
\providecommand \@url [1]{\endgroup\@href {#1}{\urlprefix }}%
\providecommand \urlprefix  [0]{URL }%
\providecommand \Eprint [0]{\href }%
\providecommand \doibase [0]{http://dx.doi.org/}%
\providecommand \selectlanguage [0]{\@gobble}%
\providecommand \bibinfo  [0]{\@secondoftwo}%
\providecommand \bibfield  [0]{\@secondoftwo}%
\providecommand \translation [1]{[#1]}%
\providecommand \BibitemOpen [0]{}%
\providecommand \bibitemStop [0]{}%
\providecommand \bibitemNoStop [0]{.\EOS\space}%
\providecommand \EOS [0]{\spacefactor3000\relax}%
\providecommand \BibitemShut  [1]{\csname bibitem#1\endcsname}%
\let\auto@bib@innerbib\@empty
\bibitem [{\citenamefont {Nielsen}\ and\ \citenamefont
  {Chuang}(2010)}]{nielsen2010quantum}%
  \BibitemOpen
  \bibfield  {author} {\bibinfo {author} {\bibfnamefont {Michael~A}\
  \bibnamefont {Nielsen}}\ and\ \bibinfo {author} {\bibfnamefont {Isaac~L}\
  \bibnamefont {Chuang}},\ }\href@noop {} {\emph {\bibinfo {title} {Quantum
  Computation and Quantum Information}}}\ (\bibinfo  {publisher} {Cambridge
  University Press},\ \bibinfo {year} {2010})\BibitemShut {NoStop}%
\bibitem [{\citenamefont {Lanyon}\ \emph {et~al.}(2010)\citenamefont {Lanyon},
  \citenamefont {Whitfield}, \citenamefont {Gillett}, \citenamefont {Goggin},
  \citenamefont {Almeida}, \citenamefont {Kassal}, \citenamefont {Biamonte},
  \citenamefont {Mohseni}, \citenamefont {Powell}, \citenamefont {Barbieri},
  \citenamefont {Aspuru-Guzik},\ and\ \citenamefont
  {White}}]{lanyon2010towards}%
  \BibitemOpen
  \bibfield  {author} {\bibinfo {author} {\bibfnamefont {Benjamin~P}\
  \bibnamefont {Lanyon}}, \bibinfo {author} {\bibfnamefont {James~D}\
  \bibnamefont {Whitfield}}, \bibinfo {author} {\bibfnamefont {Geoff~G}\
  \bibnamefont {Gillett}}, \bibinfo {author} {\bibfnamefont {Michael~E}\
  \bibnamefont {Goggin}}, \bibinfo {author} {\bibfnamefont {Marcelo~P}\
  \bibnamefont {Almeida}}, \bibinfo {author} {\bibfnamefont {Ivan}\
  \bibnamefont {Kassal}}, \bibinfo {author} {\bibfnamefont {Jacob~D}\
  \bibnamefont {Biamonte}}, \bibinfo {author} {\bibfnamefont {Masoud}\
  \bibnamefont {Mohseni}}, \bibinfo {author} {\bibfnamefont {Ben~J}\
  \bibnamefont {Powell}}, \bibinfo {author} {\bibfnamefont {Marco}\
  \bibnamefont {Barbieri}}, \bibinfo {author} {\bibfnamefont {Alán}\
  \bibnamefont {Aspuru-Guzik}}, \ and\ \bibinfo {author} {\bibfnamefont
  {Andrew~G}\ \bibnamefont {White}},\ }\bibfield  {title} {\enquote {\bibinfo
  {title} {Towards quantum chemistry on a quantum computer},}\ }\href@noop {}
  {\bibfield  {journal} {\bibinfo  {journal} {Nature chemistry}\ }\textbf
  {\bibinfo {volume} {2}},\ \bibinfo {pages} {106--111} (\bibinfo {year}
  {2010})}\BibitemShut {NoStop}%
\bibitem [{\citenamefont {Poulin}\ \emph {et~al.}(2015)\citenamefont {Poulin},
  \citenamefont {Hastings}, \citenamefont {Wecker}, \citenamefont {Wiebe},
  \citenamefont {Doherty},\ and\ \citenamefont {Troyer}}]{poulin2014trotter}%
  \BibitemOpen
  \bibfield  {author} {\bibinfo {author} {\bibfnamefont {David}\ \bibnamefont
  {Poulin}}, \bibinfo {author} {\bibfnamefont {Matthew~B}\ \bibnamefont
  {Hastings}}, \bibinfo {author} {\bibfnamefont {Dave}\ \bibnamefont {Wecker}},
  \bibinfo {author} {\bibfnamefont {Nathan}\ \bibnamefont {Wiebe}}, \bibinfo
  {author} {\bibfnamefont {Andrew~C}\ \bibnamefont {Doherty}}, \ and\ \bibinfo
  {author} {\bibfnamefont {Matthias}\ \bibnamefont {Troyer}},\ }\bibfield
  {title} {\enquote {\bibinfo {title} {The {T}rotter step size required for
  accurate quantum simulation of quantum chemistry},}\ }\href@noop {}
  {\bibfield  {journal} {\bibinfo  {journal} {Quantum Information and
  Computation}\ }\textbf {\bibinfo {volume} {15}},\ \bibinfo {pages}
  {0361--0384} (\bibinfo {year} {2015})}\BibitemShut {NoStop}%
\bibitem [{\citenamefont {Lemieux}\ \emph {et~al.}(2020)\citenamefont
  {Lemieux}, \citenamefont {Duclos-Cianci}, \citenamefont {S{\'e}n{\'e}chal},\
  and\ \citenamefont {Poulin}}]{lemieux2020resource}%
  \BibitemOpen
  \bibfield  {author} {\bibinfo {author} {\bibfnamefont {Jessica}\ \bibnamefont
  {Lemieux}}, \bibinfo {author} {\bibfnamefont {Guillaume}\ \bibnamefont
  {Duclos-Cianci}}, \bibinfo {author} {\bibfnamefont {David}\ \bibnamefont
  {S{\'e}n{\'e}chal}}, \ and\ \bibinfo {author} {\bibfnamefont {David}\
  \bibnamefont {Poulin}},\ }\bibfield  {title} {\enquote {\bibinfo {title}
  {Resource estimate for quantum many-body ground state preparation on a
  quantum computer},}\ }\href@noop {} {\bibfield  {journal} {\bibinfo
  {journal} {arXiv preprint arXiv:2006.04650}\ } (\bibinfo {year}
  {2020})}\BibitemShut {NoStop}%
\bibitem [{\citenamefont {Schuch}\ and\ \citenamefont
  {Verstraete}(2009)}]{schuch2009computational}%
  \BibitemOpen
  \bibfield  {author} {\bibinfo {author} {\bibfnamefont {Norbert}\ \bibnamefont
  {Schuch}}\ and\ \bibinfo {author} {\bibfnamefont {Frank}\ \bibnamefont
  {Verstraete}},\ }\bibfield  {title} {\enquote {\bibinfo {title}
  {Computational complexity of interacting electrons and fundamental
  limitations of density functional theory},}\ }\href@noop {} {\bibfield
  {journal} {\bibinfo  {journal} {Nature Physics}\ }\textbf {\bibinfo {volume}
  {5}},\ \bibinfo {pages} {732--735} (\bibinfo {year} {2009})}\BibitemShut
  {NoStop}%
\bibitem [{\citenamefont {Baker}\ and\ \citenamefont {Poulin}(2020)}]{BP20}%
  \BibitemOpen
  \bibfield  {author} {\bibinfo {author} {\bibfnamefont {Thomas~E}\
  \bibnamefont {Baker}}\ and\ \bibinfo {author} {\bibfnamefont {David}\
  \bibnamefont {Poulin}},\ }\bibfield  {title} {\enquote {\bibinfo {title}
  {{Density functionals and Kohn-Sham potentials with minimal wavefunction
  preparations on a quantum computer}},}\ }\href@noop {} {\bibfield  {journal}
  {\bibinfo  {journal} {Physical Review Research}\ }\textbf {\bibinfo {volume}
  {2}},\ \bibinfo {pages} {043238} (\bibinfo {year} {2020})}\BibitemShut
  {NoStop}%
\bibitem [{\citenamefont {Georges}\ \emph {et~al.}(1996)\citenamefont
  {Georges}, \citenamefont {Kotliar}, \citenamefont {Krauth},\ and\
  \citenamefont {Rozenberg}}]{georges1996dynamical}%
  \BibitemOpen
  \bibfield  {author} {\bibinfo {author} {\bibfnamefont {Antoine}\ \bibnamefont
  {Georges}}, \bibinfo {author} {\bibfnamefont {Gabriel}\ \bibnamefont
  {Kotliar}}, \bibinfo {author} {\bibfnamefont {Werner}\ \bibnamefont
  {Krauth}}, \ and\ \bibinfo {author} {\bibfnamefont {Marcelo~J}\ \bibnamefont
  {Rozenberg}},\ }\bibfield  {title} {\enquote {\bibinfo {title} {Dynamical
  mean-field theory of strongly correlated fermion systems and the limit of
  infinite dimensions},}\ }\href@noop {} {\bibfield  {journal} {\bibinfo
  {journal} {Rev.~Mod.~Phys.}\ }\textbf {\bibinfo {volume} {68}},\ \bibinfo
  {pages} {13} (\bibinfo {year} {1996})}\BibitemShut {NoStop}%
\bibitem [{\citenamefont {S{\'e}n{\'e}chal}(2008)}]{senechal2008introduction}%
  \BibitemOpen
  \bibfield  {author} {\bibinfo {author} {\bibfnamefont {David}\ \bibnamefont
  {S{\'e}n{\'e}chal}},\ }\bibfield  {title} {\enquote {\bibinfo {title} {An
  introduction to quantum cluster methods},}\ }\href@noop {} {\bibfield
  {journal} {\bibinfo  {journal} {arXiv preprint arXiv:0806.2690}\ } (\bibinfo
  {year} {2008})}\BibitemShut {NoStop}%
\bibitem [{\citenamefont {Aryasetiawan}\ and\ \citenamefont
  {Gunnarsson}(1998)}]{aryasetiawan1998gw}%
  \BibitemOpen
  \bibfield  {author} {\bibinfo {author} {\bibfnamefont {Ferdi}\ \bibnamefont
  {Aryasetiawan}}\ and\ \bibinfo {author} {\bibfnamefont {Olle}\ \bibnamefont
  {Gunnarsson}},\ }\bibfield  {title} {\enquote {\bibinfo {title} {The {GW}
  method},}\ }\href@noop {} {\bibfield  {journal} {\bibinfo  {journal} {Reports
  on Progress in Physics}\ }\textbf {\bibinfo {volume} {61}},\ \bibinfo {pages}
  {237} (\bibinfo {year} {1998})}\BibitemShut {NoStop}%
\bibitem [{\citenamefont {Chen}\ \emph {et~al.}(2017)\citenamefont {Chen},
  \citenamefont {Voora}, \citenamefont {Agee}, \citenamefont {Balasubramani},\
  and\ \citenamefont {Furche}}]{chen2017random}%
  \BibitemOpen
  \bibfield  {author} {\bibinfo {author} {\bibfnamefont {Guo~P}\ \bibnamefont
  {Chen}}, \bibinfo {author} {\bibfnamefont {Vamsee~K}\ \bibnamefont {Voora}},
  \bibinfo {author} {\bibfnamefont {Matthew~M}\ \bibnamefont {Agee}}, \bibinfo
  {author} {\bibfnamefont {Sree~Ganesh}\ \bibnamefont {Balasubramani}}, \ and\
  \bibinfo {author} {\bibfnamefont {Filipp}\ \bibnamefont {Furche}},\
  }\bibfield  {title} {\enquote {\bibinfo {title} {Random-phase approximation
  methods},}\ }\href@noop {} {\bibfield  {journal} {\bibinfo  {journal} {Annual
  Review of Physical Chemistry}\ }\textbf {\bibinfo {volume} {68}},\ \bibinfo
  {pages} {421--445} (\bibinfo {year} {2017})}\BibitemShut {NoStop}%
\bibitem [{\citenamefont {Peskin}\ and\ \citenamefont
  {Schroeder}(1995)}]{peskin1995quantum}%
  \BibitemOpen
  \bibfield  {author} {\bibinfo {author} {\bibfnamefont {Michael~E}\
  \bibnamefont {Peskin}}\ and\ \bibinfo {author} {\bibfnamefont {Daniel~V}\
  \bibnamefont {Schroeder}},\ }\href@noop {} {\emph {\bibinfo {title} {Quantum
  field theory}}}\ (\bibinfo  {publisher} {The Advanced Book Program, Perseus
  Books Reading, Massachusetts},\ \bibinfo {year} {1995})\BibitemShut {NoStop}%
\bibitem [{\citenamefont {Economou}(1983)}]{economou1983green}%
  \BibitemOpen
  \bibfield  {author} {\bibinfo {author} {\bibfnamefont {Eleftherios~N}\
  \bibnamefont {Economou}},\ }\href@noop {} {\emph {\bibinfo {title} {Green's
  functions in quantum physics}}},\ Vol.~\bibinfo {volume} {3}\ (\bibinfo
  {publisher} {Springer},\ \bibinfo {year} {1983})\BibitemShut {NoStop}%
\bibitem [{\citenamefont {Rall}(2020)}]{rall2020quantum}%
  \BibitemOpen
  \bibfield  {author} {\bibinfo {author} {\bibfnamefont {Patrick}\ \bibnamefont
  {Rall}},\ }\bibfield  {title} {\enquote {\bibinfo {title} {Quantum algorithms
  for estimating physical quantities using block encodings},}\ }\href {\doibase
  10.1103/PhysRevA.102.022408} {\bibfield  {journal} {\bibinfo  {journal}
  {Phys. Rev. A}\ }\textbf {\bibinfo {volume} {102}},\ \bibinfo {pages}
  {022408} (\bibinfo {year} {2020})}\BibitemShut {NoStop}%
\bibitem [{\citenamefont {Schollw{\"o}ck}(2005)}]{schollwock2005density}%
  \BibitemOpen
  \bibfield  {author} {\bibinfo {author} {\bibfnamefont {Ulrich}\ \bibnamefont
  {Schollw{\"o}ck}},\ }\bibfield  {title} {\enquote {\bibinfo {title} {The
  density-matrix renormalization group},}\ }\href@noop {} {\bibfield  {journal}
  {\bibinfo  {journal} {Rev.~Mod.~Phys.}\ }\textbf {\bibinfo {volume} {77}},\
  \bibinfo {pages} {259} (\bibinfo {year} {2005})}\BibitemShut {NoStop}%
\bibitem [{\citenamefont {Baker}\ \emph {et~al.}(2019)\citenamefont {Baker},
  \citenamefont {Desrosiers}, \citenamefont {Tremblay},\ and\ \citenamefont
  {Thompson}}]{baker2019m}%
  \BibitemOpen
  \bibfield  {author} {\bibinfo {author} {\bibfnamefont {Thomas~E}\
  \bibnamefont {Baker}}, \bibinfo {author} {\bibfnamefont {Samuel}\
  \bibnamefont {Desrosiers}}, \bibinfo {author} {\bibfnamefont {Maxime}\
  \bibnamefont {Tremblay}}, \ and\ \bibinfo {author} {\bibfnamefont {Martin~P}\
  \bibnamefont {Thompson}},\ }\bibfield  {title} {\enquote {\bibinfo {title}
  {{M\'ethodes de calcul avec r\'eseaux de tenseurs en physique (Basic tensor
  network computations in physics)}},}\ }\href@noop {} {\bibfield  {journal}
  {\bibinfo  {journal} {arXiv preprint arXiv:1911.11566}\ } (\bibinfo {year}
  {2019})}\BibitemShut {NoStop}%
\bibitem [{\citenamefont {Boas}(2006)}]{boas2006mathematical}%
  \BibitemOpen
  \bibfield  {author} {\bibinfo {author} {\bibfnamefont {Mary~L}\ \bibnamefont
  {Boas}},\ }\href@noop {} {\emph {\bibinfo {title} {Mathematical methods in
  the physical sciences}}}\ (\bibinfo  {publisher} {Wiley},\ \bibinfo {year}
  {2006})\BibitemShut {NoStop}%
\bibitem [{\citenamefont {Jeckelmann}(2002)}]{jeckelmann2002ground}%
  \BibitemOpen
  \bibfield  {author} {\bibinfo {author} {\bibfnamefont {Eric}\ \bibnamefont
  {Jeckelmann}},\ }\bibfield  {title} {\enquote {\bibinfo {title} {Ground-state
  phase diagram of a half-filled one-dimensional extended hubbard model},}\
  }\href@noop {} {\bibfield  {journal} {\bibinfo  {journal} {Phys.~Rev.~Lett.}\
  }\textbf {\bibinfo {volume} {89}},\ \bibinfo {pages} {236401} (\bibinfo
  {year} {2002})}\BibitemShut {NoStop}%
\bibitem [{\citenamefont {Tong}\ \emph {et~al.}(2020)\citenamefont {Tong},
  \citenamefont {An}, \citenamefont {Wiebe},\ and\ \citenamefont
  {Lin}}]{tong2020fast}%
  \BibitemOpen
  \bibfield  {author} {\bibinfo {author} {\bibfnamefont {Yu}~\bibnamefont
  {Tong}}, \bibinfo {author} {\bibfnamefont {Dong}\ \bibnamefont {An}},
  \bibinfo {author} {\bibfnamefont {Nathan}\ \bibnamefont {Wiebe}}, \ and\
  \bibinfo {author} {\bibfnamefont {Lin}\ \bibnamefont {Lin}},\ }\bibfield
  {title} {\enquote {\bibinfo {title} {Fast inversion, preconditioned quantum
  linear system solvers, and fast evaluation of matrix functions},}\
  }\href@noop {} {\bibfield  {journal} {\bibinfo  {journal} {arXiv preprint
  arXiv:2008.13295}\ } (\bibinfo {year} {2020})}\BibitemShut {NoStop}%
\bibitem [{\citenamefont {Dallaire-Demers}\ and\ \citenamefont
  {Wilhelm}(2016{\natexlab{a}})}]{dallaire2016method}%
  \BibitemOpen
  \bibfield  {author} {\bibinfo {author} {\bibfnamefont {Pierre-Luc}\
  \bibnamefont {Dallaire-Demers}}\ and\ \bibinfo {author} {\bibfnamefont
  {Frank~K}\ \bibnamefont {Wilhelm}},\ }\bibfield  {title} {\enquote {\bibinfo
  {title} {Method to efficiently simulate the thermodynamic properties of the
  fermi-hubbard model on a quantum computer},}\ }\href@noop {} {\bibfield
  {journal} {\bibinfo  {journal} {Phys.~Rev.~A}\ }\textbf {\bibinfo {volume}
  {93}},\ \bibinfo {pages} {032303} (\bibinfo {year}
  {2016}{\natexlab{a}})}\BibitemShut {NoStop}%
\bibitem [{\citenamefont {Dallaire-Demers}\ and\ \citenamefont
  {Wilhelm}(2016{\natexlab{b}})}]{dallaire2016quantum}%
  \BibitemOpen
  \bibfield  {author} {\bibinfo {author} {\bibfnamefont {Pierre-Luc}\
  \bibnamefont {Dallaire-Demers}}\ and\ \bibinfo {author} {\bibfnamefont
  {Frank~K}\ \bibnamefont {Wilhelm}},\ }\bibfield  {title} {\enquote {\bibinfo
  {title} {Quantum gates and architecture for the quantum simulation of the
  fermi-hubbard model},}\ }\href@noop {} {\bibfield  {journal} {\bibinfo
  {journal} {Phys.~Rev.~A}\ }\textbf {\bibinfo {volume} {94}},\ \bibinfo
  {pages} {062304} (\bibinfo {year} {2016}{\natexlab{b}})}\BibitemShut
  {NoStop}%
\bibitem [{\citenamefont {Bauer}\ \emph {et~al.}(2016)\citenamefont {Bauer},
  \citenamefont {Wecker}, \citenamefont {Millis}, \citenamefont {Hastings},\
  and\ \citenamefont {Troyer}}]{bauer2016hybrid}%
  \BibitemOpen
  \bibfield  {author} {\bibinfo {author} {\bibfnamefont {Bela}\ \bibnamefont
  {Bauer}}, \bibinfo {author} {\bibfnamefont {Dave}\ \bibnamefont {Wecker}},
  \bibinfo {author} {\bibfnamefont {Andrew~J}\ \bibnamefont {Millis}}, \bibinfo
  {author} {\bibfnamefont {Matthew~B}\ \bibnamefont {Hastings}}, \ and\
  \bibinfo {author} {\bibfnamefont {Matthias}\ \bibnamefont {Troyer}},\
  }\bibfield  {title} {\enquote {\bibinfo {title} {Hybrid quantum-classical
  approach to correlated materials},}\ }\href@noop {} {\bibfield  {journal}
  {\bibinfo  {journal} {Phys.~Rev.~X}\ }\textbf {\bibinfo {volume} {6}},\
  \bibinfo {pages} {031045} (\bibinfo {year} {2016})}\BibitemShut {NoStop}%
\bibitem [{\citenamefont {Brassard}\ \emph {et~al.}(2002)\citenamefont
  {Brassard}, \citenamefont {Hoyer}, \citenamefont {Mosca},\ and\ \citenamefont
  {Tapp}}]{brassard2002quantum}%
  \BibitemOpen
  \bibfield  {author} {\bibinfo {author} {\bibfnamefont {Gilles}\ \bibnamefont
  {Brassard}}, \bibinfo {author} {\bibfnamefont {Peter}\ \bibnamefont {Hoyer}},
  \bibinfo {author} {\bibfnamefont {Michele}\ \bibnamefont {Mosca}}, \ and\
  \bibinfo {author} {\bibfnamefont {Alain}\ \bibnamefont {Tapp}},\ }\bibfield
  {title} {\enquote {\bibinfo {title} {Quantum amplitude amplification and
  estimation},}\ }\href@noop {} {\bibfield  {journal} {\bibinfo  {journal}
  {Contemporary Mathematics}\ }\textbf {\bibinfo {volume} {305}},\ \bibinfo
  {pages} {53--74} (\bibinfo {year} {2002})}\BibitemShut {NoStop}%
\bibitem [{\citenamefont {Marriott}\ and\ \citenamefont
  {Watrous}(2005)}]{marriott2005quantum}%
  \BibitemOpen
  \bibfield  {author} {\bibinfo {author} {\bibfnamefont {Chris}\ \bibnamefont
  {Marriott}}\ and\ \bibinfo {author} {\bibfnamefont {John}\ \bibnamefont
  {Watrous}},\ }\bibfield  {title} {\enquote {\bibinfo {title} {{Quantum
  Arthur--Merlin games}},}\ }\href@noop {} {\bibfield  {journal} {\bibinfo
  {journal} {Comp.~Complexity}\ }\textbf {\bibinfo {volume} {14}},\ \bibinfo
  {pages} {122--152} (\bibinfo {year} {2005})}\BibitemShut {NoStop}%
\bibitem [{\citenamefont {Temme}\ \emph {et~al.}(2011)\citenamefont {Temme},
  \citenamefont {Osborne}, \citenamefont {Vollbrecht}, \citenamefont {Poulin},\
  and\ \citenamefont {Verstraete}}]{temme2011quantum}%
  \BibitemOpen
  \bibfield  {author} {\bibinfo {author} {\bibfnamefont {Kristan}\ \bibnamefont
  {Temme}}, \bibinfo {author} {\bibfnamefont {Tobias~J}\ \bibnamefont
  {Osborne}}, \bibinfo {author} {\bibfnamefont {Karl~G}\ \bibnamefont
  {Vollbrecht}}, \bibinfo {author} {\bibfnamefont {David}\ \bibnamefont
  {Poulin}}, \ and\ \bibinfo {author} {\bibfnamefont {Frank}\ \bibnamefont
  {Verstraete}},\ }\bibfield  {title} {\enquote {\bibinfo {title} {Quantum
  metropolis sampling},}\ }\href@noop {} {\bibfield  {journal} {\bibinfo
  {journal} {Nature}\ }\textbf {\bibinfo {volume} {471}},\ \bibinfo {pages}
  {87} (\bibinfo {year} {2011})}\BibitemShut {NoStop}%
\bibitem [{\citenamefont {Lanczos}(1950)}]{lanczos1950iteration}%
  \BibitemOpen
  \bibfield  {author} {\bibinfo {author} {\bibfnamefont {Cornelius}\
  \bibnamefont {Lanczos}},\ }\href@noop {} {\emph {\bibinfo {title} {An
  iteration method for the solution of the eigenvalue problem of linear
  differential and integral operators}}}\ (\bibinfo  {publisher} {United States
  Governm. Press Office Los Angeles, CA},\ \bibinfo {year} {1950})\BibitemShut
  {NoStop}%
\bibitem [{\citenamefont {Low}\ and\ \citenamefont
  {Chuang}(2019)}]{low2019hamiltonian}%
  \BibitemOpen
  \bibfield  {author} {\bibinfo {author} {\bibfnamefont {Guang~Hao}\
  \bibnamefont {Low}}\ and\ \bibinfo {author} {\bibfnamefont {Isaac~L}\
  \bibnamefont {Chuang}},\ }\bibfield  {title} {\enquote {\bibinfo {title}
  {Hamiltonian simulation by qubitization},}\ }\href@noop {} {\bibfield
  {journal} {\bibinfo  {journal} {Quantum}\ }\textbf {\bibinfo {volume} {3}},\
  \bibinfo {pages} {163} (\bibinfo {year} {2019})}\BibitemShut {NoStop}%
\bibitem [{\citenamefont {Tremblay}(2011)}]{AMST}%
  \BibitemOpen
  \bibfield  {author} {\bibinfo {author} {\bibfnamefont {Andr{\'e}-Marie}\
  \bibnamefont {Tremblay}},\ }\href@noop {} {\enquote {\bibinfo {title} {N-body
  problem in condensed matter physics},}\ } (\bibinfo {year}
  {2011})\BibitemShut {NoStop}%
\bibitem [{\citenamefont {Demir}\ \emph {et~al.}(2006)\citenamefont {Demir},
  \citenamefont {Hlousek},\ and\ \citenamefont {Papp}}]{demir2006coulomb}%
  \BibitemOpen
  \bibfield  {author} {\bibinfo {author} {\bibfnamefont {F}~\bibnamefont
  {Demir}}, \bibinfo {author} {\bibfnamefont {ZT}~\bibnamefont {Hlousek}}, \
  and\ \bibinfo {author} {\bibfnamefont {Z}~\bibnamefont {Papp}},\ }\bibfield
  {title} {\enquote {\bibinfo {title} {Coulomb-sturmian matrix elements of the
  coulomb green’s operator},}\ }\href@noop {} {\bibfield  {journal} {\bibinfo
   {journal} {Phys.~Rev.~A}\ }\textbf {\bibinfo {volume} {74}},\ \bibinfo
  {pages} {014701} (\bibinfo {year} {2006})}\BibitemShut {NoStop}%
\bibitem [{\citenamefont {Viswanath}\ and\ \citenamefont
  {M{\"u}ller}(2008)}]{viswanath2008recursion}%
  \BibitemOpen
  \bibfield  {author} {\bibinfo {author} {\bibfnamefont {VS}~\bibnamefont
  {Viswanath}}\ and\ \bibinfo {author} {\bibfnamefont {Gerhard}\ \bibnamefont
  {M{\"u}ller}},\ }\href@noop {} {\emph {\bibinfo {title} {The Recursion
  Method: Application to Many-Body Dynamics}}},\ Vol.~\bibinfo {volume} {23}\
  (\bibinfo  {publisher} {Springer Science \& Business Media},\ \bibinfo {year}
  {2008})\BibitemShut {NoStop}%
\bibitem [{\citenamefont {Baker}\ \emph {et~al.}(2018)\citenamefont {Baker},
  \citenamefont {Burke},\ and\ \citenamefont {White}}]{bakerPRB18}%
  \BibitemOpen
  \bibfield  {author} {\bibinfo {author} {\bibfnamefont {Thomas~E.}\
  \bibnamefont {Baker}}, \bibinfo {author} {\bibfnamefont {Kieron}\
  \bibnamefont {Burke}}, \ and\ \bibinfo {author} {\bibfnamefont {Steven~R.}\
  \bibnamefont {White}},\ }\bibfield  {title} {\enquote {\bibinfo {title}
  {Accurate correlation energies in one-dimensional systems from small
  system-adapted basis functions},}\ }\href {\doibase
  10.1103/PhysRevB.97.085139} {\bibfield  {journal} {\bibinfo  {journal}
  {Phys.~Rev.~B}\ }\textbf {\bibinfo {volume} {97}},\ \bibinfo {pages} {085139}
  (\bibinfo {year} {2018})}\BibitemShut {NoStop}%
\bibitem [{\citenamefont {Kothari}(2014)}]{kothari2014efficient}%
  \BibitemOpen
  \bibfield  {author} {\bibinfo {author} {\bibfnamefont {Robin}\ \bibnamefont
  {Kothari}},\ }\emph {\bibinfo {title} {Efficient algorithms in quantum query
  complexity}},\ \href@noop {} {Ph.D. thesis},\ \bibinfo  {school} {University
  of Waterloo} (\bibinfo {year} {2014})\BibitemShut {NoStop}%
\bibitem [{\citenamefont {Press}\ \emph {et~al.}(1992)\citenamefont {Press},
  \citenamefont {Teukolsky}, \citenamefont {Vetterling},\ and\ \citenamefont
  {Flannery}}]{press1992numerical}%
  \BibitemOpen
  \bibfield  {author} {\bibinfo {author} {\bibfnamefont {William~H}\
  \bibnamefont {Press}}, \bibinfo {author} {\bibfnamefont {Saul~A}\
  \bibnamefont {Teukolsky}}, \bibinfo {author} {\bibfnamefont {William~T}\
  \bibnamefont {Vetterling}}, \ and\ \bibinfo {author} {\bibfnamefont
  {Brian~P}\ \bibnamefont {Flannery}},\ }\href@noop {} {\emph {\bibinfo {title}
  {Numerical recipes in C++}}}\ (\bibinfo {year} {1992})\BibitemShut {NoStop}%
\bibitem [{\citenamefont {Wietek}\ and\ \citenamefont
  {L{\"a}uchli}(2018)}]{wietek2018sublattice}%
  \BibitemOpen
  \bibfield  {author} {\bibinfo {author} {\bibfnamefont {Alexander}\
  \bibnamefont {Wietek}}\ and\ \bibinfo {author} {\bibfnamefont {Andreas~M}\
  \bibnamefont {L{\"a}uchli}},\ }\bibfield  {title} {\enquote {\bibinfo {title}
  {Sublattice coding algorithm and distributed memory parallelization for
  large-scale exact diagonalizations of quantum many-body systems},}\
  }\href@noop {} {\bibfield  {journal} {\bibinfo  {journal} {Physical Review
  E}\ }\textbf {\bibinfo {volume} {98}},\ \bibinfo {pages} {033309} (\bibinfo
  {year} {2018})}\BibitemShut {NoStop}%
\bibitem [{\citenamefont {K{\"u}hner}\ and\ \citenamefont
  {White}(1999)}]{kuhner1999dynamical}%
  \BibitemOpen
  \bibfield  {author} {\bibinfo {author} {\bibfnamefont {Till~D}\ \bibnamefont
  {K{\"u}hner}}\ and\ \bibinfo {author} {\bibfnamefont {Steven~R}\ \bibnamefont
  {White}},\ }\bibfield  {title} {\enquote {\bibinfo {title} {Dynamical
  correlation functions using the density matrix renormalization group},}\
  }\href@noop {} {\bibfield  {journal} {\bibinfo  {journal} {Physical Review
  B}\ }\textbf {\bibinfo {volume} {60}},\ \bibinfo {pages} {335} (\bibinfo
  {year} {1999})}\BibitemShut {NoStop}%
\bibitem [{\citenamefont {Foley}(2020)}]{foleyreseaux}%
  \BibitemOpen
  \bibfield  {author} {\bibinfo {author} {\bibfnamefont {Alexandre}\
  \bibnamefont {Foley}},\ }\emph {\bibinfo {title} {R{\'e}seaux de tenseurs et
  solutionneurs d’impuret{\'e} pour la th{\'e}orie du champ moyen
  dynamique}},\ \href@noop {} {Ph.D. thesis},\ \bibinfo  {school} {Université
  de Sherbrooke} (\bibinfo {year} {2020})\BibitemShut {NoStop}%
\bibitem [{\citenamefont {Cahill}\ \emph {et~al.}(2000)\citenamefont {Cahill},
  \citenamefont {Irving}, \citenamefont {Johnston}, \citenamefont {Sexton},
  \citenamefont {Collaboration} \emph {et~al.}}]{cahill2000numerical}%
  \BibitemOpen
  \bibfield  {author} {\bibinfo {author} {\bibfnamefont {Eamonn}\ \bibnamefont
  {Cahill}}, \bibinfo {author} {\bibfnamefont {Alan}\ \bibnamefont {Irving}},
  \bibinfo {author} {\bibfnamefont {Christopher}\ \bibnamefont {Johnston}},
  \bibinfo {author} {\bibfnamefont {James}\ \bibnamefont {Sexton}}, \bibinfo
  {author} {\bibfnamefont {Ukqcd}\ \bibnamefont {Collaboration}},  \emph
  {et~al.},\ }\bibfield  {title} {\enquote {\bibinfo {title} {Numerical
  stability of lanczos methods},}\ }\href@noop {} {\bibfield  {journal}
  {\bibinfo  {journal} {Nuclear Physics B-Proceedings Supplements}\ }\textbf
  {\bibinfo {volume} {83}},\ \bibinfo {pages} {825--827} (\bibinfo {year}
  {2000})}\BibitemShut {NoStop}%
\bibitem [{\citenamefont {Aaronson}\ and\ \citenamefont
  {Arkhipov}(2011)}]{aaronson2011computational}%
  \BibitemOpen
  \bibfield  {author} {\bibinfo {author} {\bibfnamefont {Scott}\ \bibnamefont
  {Aaronson}}\ and\ \bibinfo {author} {\bibfnamefont {Alex}\ \bibnamefont
  {Arkhipov}},\ }\bibfield  {title} {\enquote {\bibinfo {title} {The
  computational complexity of linear optics},}\ }in\ \href@noop {} {\emph
  {\bibinfo {booktitle} {Proceedings of the forty-third annual ACM symposium on
  Theory of computing}}}\ (\bibinfo {year} {2011})\ pp.\ \bibinfo {pages}
  {333--342}\BibitemShut {NoStop}%
\bibitem [{\citenamefont {Motta}\ \emph {et~al.}(2020)\citenamefont {Motta},
  \citenamefont {Sun}, \citenamefont {Tan}, \citenamefont {O’Rourke},
  \citenamefont {Ye}, \citenamefont {Minnich}, \citenamefont {Brand{\~a}o},\
  and\ \citenamefont {Chan}}]{motta2020determining}%
  \BibitemOpen
  \bibfield  {author} {\bibinfo {author} {\bibfnamefont {Mario}\ \bibnamefont
  {Motta}}, \bibinfo {author} {\bibfnamefont {Chong}\ \bibnamefont {Sun}},
  \bibinfo {author} {\bibfnamefont {Adrian~TK}\ \bibnamefont {Tan}}, \bibinfo
  {author} {\bibfnamefont {Matthew~J}\ \bibnamefont {O’Rourke}}, \bibinfo
  {author} {\bibfnamefont {Erika}\ \bibnamefont {Ye}}, \bibinfo {author}
  {\bibfnamefont {Austin~J}\ \bibnamefont {Minnich}}, \bibinfo {author}
  {\bibfnamefont {Fernando~GSL}\ \bibnamefont {Brand{\~a}o}}, \ and\ \bibinfo
  {author} {\bibfnamefont {Garnet Kin-Lic}\ \bibnamefont {Chan}},\ }\bibfield
  {title} {\enquote {\bibinfo {title} {Determining eigenstates and thermal
  states on a quantum computer using quantum imaginary time evolution},}\
  }\href@noop {} {\bibfield  {journal} {\bibinfo  {journal} {Nature Physics}\
  }\textbf {\bibinfo {volume} {16}},\ \bibinfo {pages} {205--210} (\bibinfo
  {year} {2020})}\BibitemShut {NoStop}%
\bibitem [{\citenamefont {Childs}\ \emph {et~al.}(2017)\citenamefont {Childs},
  \citenamefont {Kothari},\ and\ \citenamefont {Somma}}]{childs2017quantum}%
  \BibitemOpen
  \bibfield  {author} {\bibinfo {author} {\bibfnamefont {Andrew~M}\
  \bibnamefont {Childs}}, \bibinfo {author} {\bibfnamefont {Robin}\
  \bibnamefont {Kothari}}, \ and\ \bibinfo {author} {\bibfnamefont {Rolando~D}\
  \bibnamefont {Somma}},\ }\bibfield  {title} {\enquote {\bibinfo {title}
  {Quantum algorithm for systems of linear equations with exponentially
  improved dependence on precision},}\ }\href@noop {} {\bibfield  {journal}
  {\bibinfo  {journal} {SIAM Journal on Computing}\ }\textbf {\bibinfo {volume}
  {46}},\ \bibinfo {pages} {1920--1950} (\bibinfo {year} {2017})}\BibitemShut
  {NoStop}%
\bibitem [{\citenamefont {Paetznick}\ and\ \citenamefont
  {Svore}(2013)}]{paetznick2013repeat}%
  \BibitemOpen
  \bibfield  {author} {\bibinfo {author} {\bibfnamefont {Adam}\ \bibnamefont
  {Paetznick}}\ and\ \bibinfo {author} {\bibfnamefont {Krysta~M}\ \bibnamefont
  {Svore}},\ }\bibfield  {title} {\enquote {\bibinfo {title}
  {Repeat-until-success: Non-deterministic decomposition of single-qubit
  unitaries},}\ }\href@noop {} {\bibfield  {journal} {\bibinfo  {journal}
  {arXiv preprint arXiv:1311.1074}\ } (\bibinfo {year} {2013})}\BibitemShut
  {NoStop}%
\bibitem [{\citenamefont {Brassard}\ \emph {et~al.}(1998)\citenamefont
  {Brassard}, \citenamefont {H{\o}yer},\ and\ \citenamefont
  {Tapp}}]{brassard1998quantum}%
  \BibitemOpen
  \bibfield  {author} {\bibinfo {author} {\bibfnamefont {Gilles}\ \bibnamefont
  {Brassard}}, \bibinfo {author} {\bibfnamefont {Peter}\ \bibnamefont
  {H{\o}yer}}, \ and\ \bibinfo {author} {\bibfnamefont {Alain}\ \bibnamefont
  {Tapp}},\ }\bibfield  {title} {\enquote {\bibinfo {title} {Quantum
  counting},}\ }in\ \href@noop {} {\emph {\bibinfo {booktitle} {International
  Colloquium on Automata, Languages, and Programming}}}\ (\bibinfo
  {organization} {Springer},\ \bibinfo {year} {1998})\ pp.\ \bibinfo {pages}
  {820--831}\BibitemShut {NoStop}%
\bibitem [{\citenamefont {Grover}(1996)}]{grover1996fast}%
  \BibitemOpen
  \bibfield  {author} {\bibinfo {author} {\bibfnamefont {Lov~K}\ \bibnamefont
  {Grover}},\ }\bibfield  {title} {\enquote {\bibinfo {title} {A fast quantum
  mechanical algorithm for database search},}\ }in\ \href@noop {} {\emph
  {\bibinfo {booktitle} {Proceedings of the twenty-eighth annual ACM symposium
  on Theory of computing}}}\ (\bibinfo {year} {1996})\ pp.\ \bibinfo {pages}
  {212--219}\BibitemShut {NoStop}%
\bibitem [{\citenamefont {Berry}\ \emph {et~al.}(2015)\citenamefont {Berry},
  \citenamefont {Childs}, \citenamefont {Cleve}, \citenamefont {Kothari},\ and\
  \citenamefont {Somma}}]{berry2015simulating}%
  \BibitemOpen
  \bibfield  {author} {\bibinfo {author} {\bibfnamefont {Dominic~W}\
  \bibnamefont {Berry}}, \bibinfo {author} {\bibfnamefont {Andrew~M}\
  \bibnamefont {Childs}}, \bibinfo {author} {\bibfnamefont {Richard}\
  \bibnamefont {Cleve}}, \bibinfo {author} {\bibfnamefont {Robin}\ \bibnamefont
  {Kothari}}, \ and\ \bibinfo {author} {\bibfnamefont {Rolando~D}\ \bibnamefont
  {Somma}},\ }\bibfield  {title} {\enquote {\bibinfo {title} {Simulating
  hamiltonian dynamics with a truncated taylor series},}\ }\href@noop {}
  {\bibfield  {journal} {\bibinfo  {journal} {Phys.~Rev.~Lett.}\ }\textbf
  {\bibinfo {volume} {114}},\ \bibinfo {pages} {090502} (\bibinfo {year}
  {2015})}\BibitemShut {NoStop}%
\bibitem [{\citenamefont {Nagaj}\ \emph {et~al.}(2009)\citenamefont {Nagaj},
  \citenamefont {Wocjan},\ and\ \citenamefont {Zhang}}]{nagaj2009fast}%
  \BibitemOpen
  \bibfield  {author} {\bibinfo {author} {\bibfnamefont {Daniel}\ \bibnamefont
  {Nagaj}}, \bibinfo {author} {\bibfnamefont {Pawel}\ \bibnamefont {Wocjan}}, \
  and\ \bibinfo {author} {\bibfnamefont {Yong}\ \bibnamefont {Zhang}},\
  }\bibfield  {title} {\enquote {\bibinfo {title} {Fast amplification of
  qma},}\ }\href@noop {} {\bibfield  {journal} {\bibinfo  {journal} {Quantum
  Information \& Computation}\ }\textbf {\bibinfo {volume} {9}},\ \bibinfo
  {pages} {1053--1068} (\bibinfo {year} {2009})}\BibitemShut {NoStop}%
\bibitem [{\citenamefont {Poulin}()}]{davidPconv}%
  \BibitemOpen
  \bibfield  {author} {\bibinfo {author} {\bibfnamefont {David}\ \bibnamefont
  {Poulin}},\ }\href@noop {} {}\bibinfo {howpublished} {personal
  communication}\BibitemShut {NoStop}%
\bibitem [{\citenamefont {Kassal}\ and\ \citenamefont
  {Aspuru-Guzik}(2009)}]{kassal2009quantum}%
  \BibitemOpen
  \bibfield  {author} {\bibinfo {author} {\bibfnamefont {Ivan}\ \bibnamefont
  {Kassal}}\ and\ \bibinfo {author} {\bibfnamefont {Al{\'a}n}\ \bibnamefont
  {Aspuru-Guzik}},\ }\bibfield  {title} {\enquote {\bibinfo {title} {Quantum
  algorithm for molecular properties and geometry optimization},}\ }\href@noop
  {} {\bibfield  {journal} {\bibinfo  {journal} {J.~Chem.~Phys.}\ }\textbf
  {\bibinfo {volume} {131}},\ \bibinfo {pages} {224102} (\bibinfo {year}
  {2009})}\BibitemShut {NoStop}%
\bibitem [{\citenamefont {Jordan}(2005)}]{jordan2005fast}%
  \BibitemOpen
  \bibfield  {author} {\bibinfo {author} {\bibfnamefont {Stephen~P}\
  \bibnamefont {Jordan}},\ }\bibfield  {title} {\enquote {\bibinfo {title}
  {Fast quantum algorithm for numerical gradient estimation},}\ }\href@noop {}
  {\bibfield  {journal} {\bibinfo  {journal} {Phys.~Rev.~Lett.}\ }\textbf
  {\bibinfo {volume} {95}},\ \bibinfo {pages} {050501} (\bibinfo {year}
  {2005})}\BibitemShut {NoStop}%
\end{thebibliography}%

\end{document}